\newcommand{\Statey}{\Statex\hspace*{\ALG@thistlm}}
\newcommand{\multiline}[1]{%
  \begin{tabularx}{\dimexpr\linewidth-\ALG@thistlm}[t]{@{}X@{}}
    #1
  \end{tabularx}
}
\def\underbracex#1#2{\mathop{\vtop{\m@th\ialign{##\crcr
   $\hfil\displaystyle{#2}\hfil$\crcr
   \noalign{\kern3\p@\nointerlineskip}%
   #1\crcr\noalign{\kern3\p@}}}}\limits}
\def\underbracea{\underbracex\upbracefilla}
\def\upbracefilla{$\m@th \setbox\z@\hbox{$\braceld$}%
  \bracelu\leaders\vrule \@height\ht\z@ \@depth\z@\hfill 
\kern\p@\vrule \@width\p@\kern\p@\vrule \@width\p@\kern\p@\vrule \@width\p@
$}
\def\upbracefillb{$\m@th \setbox\z@\hbox{$\braceld$}%
\vrule \@width\p@\kern\p@\vrule \@width\p@\kern\p@\vrule \@width\p@\kern\p@
 \leaders\vrule \@height\ht\z@ \@depth\z@\hfill\bracerd
  \braceld\leaders\vrule \@height\ht\z@ \@depth\z@\hfill
\kern\p@\vrule \@width\p@\kern\p@\vrule \@width\p@\kern\p@\vrule \@width\p@
$}
\def\upbracefillc{$\m@th \setbox\z@\hbox{$\braceld$}%
\vrule \@width\p@\kern\p@\vrule \@width\p@\kern\p@\vrule \@width\p@\kern\p@
\leaders\vrule \@height\ht\z@ \@depth\z@\hfill
\kern\p@\vrule \@width\p@\kern\p@\vrule \@width\p@\kern\p@\vrule \@width\p@
$}
\def\upbracefilld{$\m@th \setbox\z@\hbox{$\braceld$}%
\vrule \@width\p@\kern\p@\vrule \@width\p@\kern\p@\vrule \@width\p@\kern\p@
 \leaders\vrule \@height\ht\z@ \@depth\z@\hfill\braceru$}
 \def\underbracebd{\underbracex\upbracefillbd}
\def\upbracefillbd{$\m@th \setbox\z@\hbox{$\braceld$}%
\vrule \@width\p@\kern\p@\vrule \@width\p@\kern\p@\vrule \@width\p@\kern\p@
\bracerd\braceld
 \leaders\vrule \@height\ht\z@ \@depth\z@\hfill\braceru$}
\newtheorem{theorem}{Theorem}
\newtheorem{lemma}{Lemma}
\newtheorem{remark}{Remark}
\newtheorem{assumption}{Assumption}
\begin{document}

%
\title{Energy-efficient Federated Learning for UAV Communications}


\author{\IEEEauthorblockN{Chien-Wei Fu, {\textit{Student Member}}, {\textit{IEEE}}}, Meng-Lin Ku, {\textit{Senior Member}}, {\textit{IEEE}}}

\maketitle
{
\begin{abstract}
In this paper, we propose an unmanned aerial vehicle (UAV)-assisted federated learning (FL) framework that jointly optimizes UAV trajectory, user participation, power allocation, and data volume control to minimize overall system energy consumption. We begin by deriving the convergence accuracy of the FL model under multiple local updates, enabling a theoretical understanding of how user participation and data volume affect FL learning performance. The resulting joint optimization problem is non-convex; to address this, we employ alternating optimization (AO) and successive convex approximation (SCA) techniques to convexify the non-convex constraints, leading to the design of an iterative energy consumption optimization (ECO) algorithm. Simulation results confirm that ECO consistently outperform existing baseline schemes.

\end{abstract}

\begin{IEEEkeywords}
Federated learning (FL), unmanned aerial vehicle (UAV), UAV trajectory, user participation, power control, data volume control, FL convergence
\end{IEEEkeywords}
}
\section{Introduction}
{}

The rapid rise of data-driven applications in 6G networks, such as smart cities and autonomous systems, calls for scalable and privacy-preserving machine learning solutions  {\cite{Wang2023}}. Traditional centralized learning is inefficient and raises privacy concerns due to large-scale data transmission {\cite{McMahan2017}}. Federated Learning (FL) addresses this by enabling local training on devices and only sharing model parameters, reducing communication costs and preserving data privacy {\cite{Nguyen2021}}.

Meanwhile, Unmanned Aerial Vehicles (UAVs) have become essential in extending network coverage thanks to their high mobility and flexible deployment \cite{Pham2022b}. Integrating UAVs with FL enhances edge computing capabilities, improves wireless links, and boosts training efficiency through optimized flight trajectories. However, UAV-assisted FL also introduces challenges, such as efficient resource allocation, reliable communication, trajectory planning, and energy management.

While many studies have contributed to UAV-assisted FL, they often exhibit limited scope. Works such as \cite{Pham2022} and \cite{Pham2021} overlook user selection and data volume control. Although \cite{Chen2024} includes user selection and model compression, it omits data volume control. Studies like \cite{He2020} and \cite{Li2024} consider learning efficiency or latency but ignore energy consumption and holistic optimization. Many, including  \cite{Pham2022}, \cite{Pham2021}, \cite{Chen2024} and \cite{Zhagypar2025}, assume static UAVs, neglecting trajectory and energy dynamics. While \cite{Zheng2024} addresses energy-efficient client selection and power control, it lacks joint optimization with data volume or trajectory. 
{}

{}

{}
To address the above limitations, this paper proposes a UAV-assisted FL framework that jointly optimizes UAV trajectory, user selection, power control, and data volume allocation to minimize system energy consumption. A theoretical convergence analysis is provided, revealing the impact of user selection and data volume on model accuracy. We develop an analytical model linking data volume to FL performance and propose a system model capturing UAV-user interactions. The non-convex optimization problem is solved using successive convex approximation (SCA) with an alternating optimization (AO) algorithm. Simulation results demonstrate the effectiveness of the proposed approach across different model sizes, achieving lower energy consumption compared to baseline methods.
\begin{figure}[h]
\centering
\includegraphics[width=0.48\textwidth]{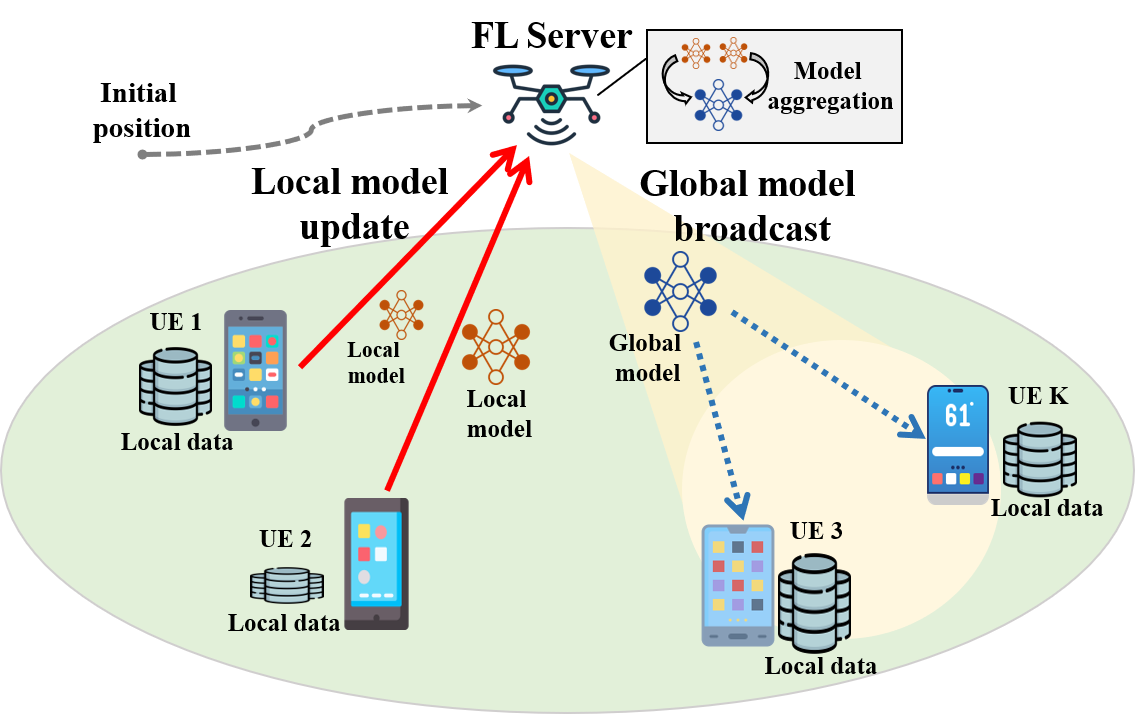}
\caption{UAV-assisted federated learning communications ($K=4$).}
\label{Fig_1}
\end{figure}

\section{System Model And Problem Formulation}\label{System_model}

\subsection{System Model}
Fig. \ref{Fig_1} shows a UAV-assisted FL communication network, consisting of a UAV and $K$ ground users (UEs) with different sizes of local data. The UAV operates as an FL server and flies over the area to perform FL simultaneously with the $K$ UEs. We adopt a time-slotted model, assuming that the entire task period $T$ is divided into $N$ time slots, with $N$ discrete time instants ($n=1,\ldots,N$). We define the sets of UEs and time instants as $\mathcal{K}=\{1,\ldots,K\}$  and $\mathcal{N}=\{1,\ldots,N-1\}$, respectively. The time intervals are detailed as in Fig. \ref{Fig_timeslot}, where $t^{cp}_k[n]$ represents the local model computation time of the $k$th UE at time $n$. Furthermore, $ t^{fly}[n]$ and $t^{hov}[n]$ denote the flight and hovering time of the UAV at time $n$, and $t^{cm}$, $t^{agg}$, and $t^{bc}$ represent the time for UEs to upload the FL model, for the model to be aggregated on the UAV, and for the UAV to broadcast the FL model, respectively. To enable synchronized FL among UEs, the time duration of $t^{cm}$, $t^{agg}$, and $t^{bc}$ is assumed to be fixed.
\begin{figure}[h]
\centering
\includegraphics[width=0.48\textwidth]{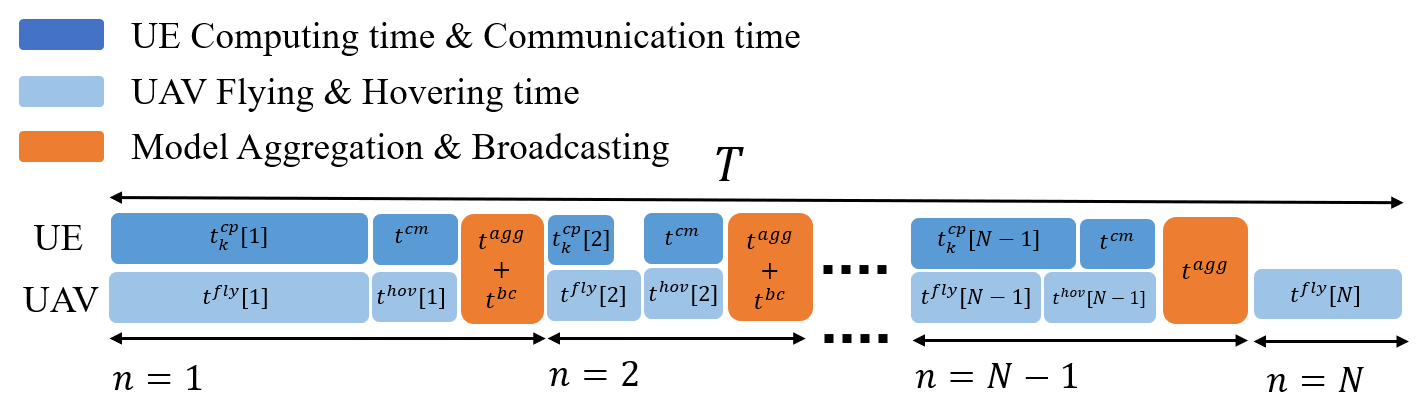}
\caption{Time slot model.}
\label{Fig_timeslot}
\end{figure}

We assume the UEs' positions remain unchanged, with the $k$th UE's horizontal two-dimensional (2D) coordinate given by
\begin{align}
&\textbf g_{UE,k}=[{\bar{x}}_k,{\bar{y}}_k]^T \in \mathbb{R}^2 , \forall k\in \mathcal{K}\label{eq2} \,.
\end{align}The UAV is assumed to fly at a fixed altitude $H$ and constant speed $v_{UAV}$, with its horizontal 2D coordinate at time $n$:
\begin{align}\label{eq3}
&\textbf q[n]=[x[n],y[n]]^T \in \mathbb{R}^{2},\forall n \in\mathcal{N}\cup\{0,N\} \,,
\end{align}
and the UAV's trajectories are subject to the constraints of the initial position $\textbf q^{ini}$ and final position $\textbf q^{fin}$:
\begin{align}
&\textbf q[0]=\textbf q^{ini} ;\label{eq_uav_ini}\\ 
&\textbf q[N]=\textbf q^{fin} .\label{eq_uav_fin}
\end{align}

For the channel model, we assume the UAV flies at a high enough altitude, ensuring a line-of-sight (LOS) channel between the UAV and UEs. The path loss (in decibels) between the UAV and the $k$th UE at time $n$ is given as {\cite{Mozaffari2016}}:
\begin{align}
& g_k\left[n\right]=20{\log}_{10}\left(\frac{4\pi f_{c}d_{k}\left[n\right]}{c}\right) \,, \forall k\in\mathcal{K}, \forall n\in\mathcal{N},\,\label{eq_pathloss}
\end{align}where $f_{c}$ is the carrier frequency (Hz), $c$ is the speed of light (m/s), and $d_{k}[n]$ represents the distance between the UAV and the $k$th UE at time $n$, given as
\begin{align}
d_{k}\left[n\right]= \sqrt{\left\|\textbf q[n]-\textbf g_{UE,k}\right\|_2^2+H^2} \,, \forall k\in\mathcal{K}, \forall n\in\mathcal{N}.\label{eq_distance_uav_UE}
\end{align}

The time-slotted model is detailed as follows. Let $D_k$ denote the data amount used by the $k$th UE for FL participation. The local computation time for UE $k$ at time $n$ is calculated as:
\begin{align}\label{time_cp_user}
    &t_k^{cp}[n]=a_k[n]D_k\Phi_k, \forall k\in\mathcal{K}, \forall n\in\mathcal{N},
\end{align}
where $\Phi_k=I(\frac{C}{f_{cpu,k}})$ with $f_{cpu,k}$ denoting the CPU frequency, $I$ is the number of local update iterations, and $C$ is the computation required to process one bit. The binary variable $a_k[n] \in \{0,1\}$ indicates FL participation ($a_k[n] = 1$ if the $k$th UE participates at time $n$, otherwise $a_k[n] = 0$). 

According to the time-slotted model in Fig. \ref{Fig_timeslot}, the UE performs the local model computation while the UAV is in flight, and the UE transmits the local model during the UAV hovering phase. In this scenario, the length of each time instant can vary, subject to the following constraints:
\begin{align}
t^{cm} \leq t^{hov}[n] ,\forall n \in \mathcal{N} ;\label{cons_communacation_hovering}
\end{align}
\begin{align}
t^{cm}+t_k^{cp}[n] \leq t^{fly}[n]+t^{hov}[n], \forall k\in \mathcal{K},\forall n\in \mathcal{N};\label{cons_time_UE_less_UAV}
\end{align}
\begin{align}
&\sum^N_{n=1}t^{fly}[n]+\sum^{N-1}_{n=1}(t^{hov}[n]+t^{agg})+(N-2) t^{bc}\leq T,\label{cons_total_time}
\end{align}
where the flight time of the UAV at time $n$ is given by
\begin{align}
    t^{fly}[n]=\frac{\|\textbf{q}[n]-\textbf{q}[n-1]\|}{v_{UAV}},\forall n \in\mathcal{N}\cup\{N\}.
\end{align}
The constraint (\ref{cons_communacation_hovering}) mandates the UAV to hover during FL model uploading to ensure a stable communication link. To perform FL model aggregation, the constraint (\ref{cons_time_UE_less_UAV}) ensures that the UE's local model computation and communication time does not exceed the UAV's flight and hover time. The constraint (\ref{cons_total_time}) ensures that the UAV's flight and hover time, along with the FL model aggregation time $t^{cm}$ and global model broadcasting time $t^{bc}$, do not exceed the task period $T$.

\subsection{Transmission Model}
Since the UEs share the same bandwidth $W$ for communication, the UAV experiences multiuser interference. The signal-to-interference-plus-noise ratio (SINR) for the $k$th UE at the UAV and time $n$ is expressed as
\begin{align}
{\Gamma}_{k}\left[n\right]=\frac{p_k\left[n\right]\tilde g_{k}\left[n\right]}{\sum_{i=1,i\neq k}^{K}p_i\left[n\right] \tilde g_{i}\left[n\right]+\sigma_z^2} , \forall k \in \mathcal{K},\forall n \in \mathcal{N}, \label{eq_SINR}
\end{align}
{where ${\tilde{g}}_{k}\left[n\right]=10^{\frac{-g_k[n]}{10}}$} , {$\sigma_z^2$} is the power of additive white Gaussian noise, and $p_k\left[n\right]$ is the uplink transmit power of UEs, subject to the constraint:
\begin{align} \label{UE_power}
    0\leq p_k[n]\leq p_{UE}^{max},\forall k \in \mathcal{K},\forall n\in \mathcal{N}, 
\end{align}
where $p_{UE}^{max}$ is the maximum allowable power of UEs. The achievable rate of the $k$th UE at time $n$ can be calculated as
\begin{align}\label{eq_transmi_rate}
R_k[n]= {W{\log}_2\left(1+{\Gamma}_{k}\left[n\right]\right)},\forall k\in\mathcal{K},\forall n\in\mathcal{N}.
\end{align}

Assuming the local model size is $\mathcal{Q}$, a rate constraint is introduced to ensure that the uplink data volume transmitted by the $k$th UE at time $n$ within the communication duration $t^{cm}$ is no less than $\mathcal{Q}$, thereby enabling model aggregation:
\begin{align}\label{UE_tran_cons}
   a_k[n]\mathcal{Q}\leq t^{cm}R_k[n],\forall k \in \mathcal{K},\forall n\in\mathcal{N}.
\end{align}

After the FL server generates the global model via aggregation, it broadcasts the model to the FL users selected for participation at the subsequent time instant. Notably, the set of participating users at time $n$ may differ from that at the previous time step. The broadcast rate for user $k$ at time $n$, denoted as $R^{bc}_k[n]$, is defined as follows:
\begin{align}
    R^{bc}_k[n]&= {W{\log}_2\left(1+\frac{p_{UAV}[n]\tilde g_k[n]}{\sigma_z^2}\right)}\nonumber\\
    &\qquad\qquad\qquad,\forall k\in\mathcal{K},\forall n\in \mathcal{N}\setminus\{N-1\},
\end{align}
where $p_{UAV}[n]$ is the broadcast power of the UAV, constrained by its maximum power limit $p_{UAV}^{max}$:
\begin{align}\label{UAV_power}
    0\leq p_{UAV}[n]\leq p_{UAV}^{max},\forall n\in \mathcal{N}\setminus\{N-1\}.
\end{align}
To ensure that the global model broadcasting is completed within the designated broadcasting time $t^{bc}$, the following constraint is imposed:
\begin{align}\label{UAV_broad_cons}
    a_k[n+1]\mathcal{Q}\leq t^{bc}R_k^{bc}[n],\forall k \in \mathcal{K},\forall n\in \mathcal{N}\setminus\{N-1\}.
\end{align}

\subsection{FL Model}
A global loss function $f_G(\cdot)$ is defined to serve as a performance measure for the FL model $f_G(\mathbf{{w}})\triangleq\sum_{k=1}^{K}f_{L,k}\left(\mathbf{w}\right)$, {where $f_{L,k}(\mathbf{w})$ is the average local loss function of the $k$th UE, evaluated over its local dataset of size $D_k$, and $\mathbf{w}$ is the FL model parameters. Let $\mathcal{I} = \{nI \mid n \in \mathcal{N}\}$ be the set of global aggregation intervals. The participating UEs remains fixed within each time slot, during which every participating UE performs $I-1$ local updates, followed by one global aggregation at the UAV server. The participating index of the $k$th UE at the $i$th FL model update is thus given by:
\begin{align}
   a_k^i=a_k[n] \in\{0,1\}, \forall i\in \{nI-1,nI-2,...,nI-I\}.\label{ak_i_to_n}
\end{align}
The FL model parameter $\mathbf{w}$ is collaboratively trained by the UEs with assistance from the UAV server, and the update process is divided into two steps:
\subsubsection{Local Update}
Denote the local model parameters of the $k$th UE in the $i$th update as $\mathbf{w}_k^{i}$. Each UE performs local updates utilizing stochastic gradient descent (SGD) as follows:
\begin{align}
\mathbf{w}_k^{i}=\mathbf{w}_k^{i-1}-\eta\nabla f_{L,k}(\mathbf{w}_k^{i-1},s_k^{i-1}), \forall i\notin \mathcal{I}\, ,\label{local_update}
\end{align}
where $\eta$ is the learning rate, $\nabla f_{L,k}$ is the gradient of $f_{L,k}(\mathbf{w})$, and $s_k^{i-1}$ represents the random uniform sampling of data used by the $k$th UE in the $(i-1)$th SGD update.
\subsubsection{Global Aggregation}
The global aggregation is performed at the UAV server after every local updates, occurring at each time instant $nI$:
\begin{align}
        \mathbf{w}_k^{i+1}&=\sum_{k=1}^K\frac{ a_k^i D_k}{\sum_{j=1}^K a_j^i D_j}\mathbf{w}_k^{i}, \forall i+1\in \mathcal{I}.\label{gb_agg}
\end{align}

To facilitate the convergence analysis of the FL model, a "virtual" model aggregation is introduced as follows:  
\begin{align}
\bar{\mathbf{w}}^{i} = \sum_{k=1}^K \frac{a_k^i D_k}{\sum_{j=1}^K a_j^i D_j} \mathbf{w}_k^{i}, \forall i,
\end{align}
where $\bar{\mathbf{w}}^{i}$ is the virtual global model obtained from the local models in the $i$th update. The term "virtual" indicates that this aggregation is not physically performed on the UAV server, but serves as a conceptual representation for analysis. Using its recursive form, the sequence of virtual updates is given as:  
\begin{align} 
\bar{\mathbf{w}}^{i+1} = \bar{\mathbf{w}}^{i} - \eta \sum_{k=1}^K \frac{a_k^i D_k}{\sum_{j=1}^K a_j^i D_j} \nabla f_{L,k}(\mathbf{w}_k^{i}, s_k^i). \label{virtual_update}
\end{align}
This recursive formulation shows that each local update contributes to an equivalent global model update, thereby linking local and global updates. Below, we introduce common assumptions about the loss function $f(\cdot)$ {\cite{Cao2022}}, where $f(\cdot)$ may represent either the global loss $f_{G}(\cdot)$ or local loss $f_{L,k}(\cdot)$.
Below, we introduce common assumptions about the loss function $f(\cdot)$ {\cite{Cao2022}}, where $f(\cdot)$ may represent either the global loss $f_{G}(\cdot)$ or local loss $f_{L,k}(\cdot)$.

\begin{assumption}\label{assumption_uconcave}
($\mu$-strongly convex). For all $\textbf{\rm a},\textbf{\rm b}$, $f(\textbf{\rm b})\geq f(\textbf{\rm a})+\langle\nabla f(\textbf{\rm a}),\textbf{\rm b}-\textbf{\rm a}\rangle+\frac{\mu}{2}\|\textbf{\rm b}-\textbf{\rm a}\|^2$, {where $\mu>0$ is a strongly convex parameter.}
\end{assumption}

\begin{assumption}\label{assumption_lsmooth}
(L-smooth). For all $\textbf{\rm a},\textbf{\rm b}$, $\|\nabla f(\textbf{\rm a})-\nabla f(\textbf{\rm b})\|\leq L\|\textbf{\rm a}-\textbf{\rm b}\|,$ {where $L>0$ is a smoothness parameter.}
\end{assumption}

\begin{assumption}\label{Assumption_sigmak}
(Bounded sample variance in stochastic gradients). For any FL user in the $i$th SGD update, $\mathbb{E} \Big[\left\|\nabla f_{L,k}(\mathbf{w}_k^i)-\nabla f_{L,k}(\mathbf{w}_k^i,s^i_k)\right\|^2 \Big]\leq\epsilon^2_v,$ where {$\epsilon^2_v$ is an upper bound of the gradient variation.}
\end{assumption}

\begin{assumption}\label{Assumption_sigmaG}
(Bounded square norm expectation in stochastic gradients). For any FL user in the $i$th SGD update, $\mathbb{E} \Big[\left\|\nabla f_{L,k}(\mathbf{w}_k^i,s^i_k)\right\|^2 \Big]\leq\epsilon_s^2,$ where   {$\epsilon_s^2$ is an upper bound of the update magnitude.}
\end{assumption}
{\begin{remark}\label{assumption_lsmooth_remark_1}
The L-smooth property (Assumption \ref{assumption_lsmooth}) implies the following inequality: For all $\textbf{\rm a},\textbf{\rm b}$, $\|\nabla f(\textbf{\rm a})\|^2\leq2 L\left(f(\textbf{\rm a})-f(\textbf{\rm b})\right)$ \cite{gower2018}.
\end{remark}
\begin{remark}\label{assumption_lsmooth_remark_2}
    Assumption \ref{assumption_lsmooth} can be extended to the inequality: For all ${\rm x}\in\mathbb{R}^d$, $f({{\rm x}})-f({{\rm x}^*})\leq\frac{L}{2}\left\|{{\rm x}}-{{\rm x}^*}\right\|^2$, where ${\rm x}^*=\arg \min_{{\rm x}\in\mathbb{R}^d} f({\rm x})$ \cite{gower2018}.
\end{remark}
}
\begin{theorem}\label{model_acc_Theorem1}
    Let $\mathbf{w}^*$ be the optimal FL parameter of the global function. Under Assumptions 1-4 and a learning rate $\eta\leq \frac{1}{2L}$, the FL model accuracy with $(i+1)$ updates is bounded by 
    \begin{align}
        &\mathbb{E}\Big[f_G(\bar{\mathbf{w}}^{i+1})-f_G(\mathbf{w}^*)\Big]\nonumber\\
        &\leq\frac{L}{2}\Biggl[\omega^{i+1}\mathbb{E}\left[\|\bar{\mathbf{w}}^0-\mathbf{w}^*\|^2\right]+A_1\left(\frac{1-\omega^{i+1}}{\eta\mu}\right)\nonumber\\
        &\;+\eta^2\sum_{l=0}^{i}\left(\omega^{i-l}\sum_{k=1}^K(\bar D_k^{l})^2\epsilon^2_v\right)\Biggr],\label{Theo1_acc}
    \end{align}
    where {
     $\omega=1-\eta\mu,
     A_1=(1+\frac{\zeta}{2\eta})I^2\eta^2\epsilon_s^2+\frac{\eta L^2 \epsilon_w}{2}(\zeta+4\eta),
     \zeta=2\eta(1-\eta2L)$, $ \bar D_k^i=\frac{a_k^i D_k}{\sum_{j=1}^Ka_j^i D_j},$ and $\|\mathbf{w}^*-\mathbf{w}_k^*\|^2\leq\epsilon_w$.
        }
\end{theorem}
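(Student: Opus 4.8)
The plan is to reduce the theorem to a one-step contraction on the virtual sequence $\{\bar{\mathbf{w}}^i\}$ and then unroll it. By Remark~\ref{assumption_lsmooth_remark_2} applied to $f_G$ (whose minimizer is $\mathbf{w}^*$), $f_G(\bar{\mathbf{w}}^{i+1})-f_G(\mathbf{w}^*)\le\frac{L}{2}\|\bar{\mathbf{w}}^{i+1}-\mathbf{w}^*\|^2$, so after taking expectations it suffices to establish the recursion
\begin{align}
\mathbb{E}\big[\|\bar{\mathbf{w}}^{i+1}-\mathbf{w}^*\|^2\big]\le\omega\,\mathbb{E}\big[\|\bar{\mathbf{w}}^{i}-\mathbf{w}^*\|^2\big]+A_1+\eta^2\sum_{k=1}^{K}(\bar D_k^i)^2\epsilon_v^2,\nonumber
\end{align}
with $\omega=1-\eta\mu$. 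Iterating this from index $i$ down to $0$ and using $\sum_{l=0}^{i}\omega^{l}=\frac{1-\omega^{i+1}}{\eta\mu}$ then reproduces exactly (\ref{Theo1_acc}), so the entire argument is concentrated in proving this one-step inequality.

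To obtain the recursion I would expand the virtual update (\ref{virtual_update}),
\begin{align}
\|\bar{\mathbf{w}}^{i+1}-\mathbf{w}^*\|^2&=\|\bar{\mathbf{w}}^{i}-\mathbf{w}^*\|^2-2\eta\Big\langle\bar{\mathbf{w}}^{i}-\mathbf{w}^*,\sum_{k=1}^K\bar D_k^i\nabla f_{L,k}(\mathbf{w}_k^i,s_k^i)\Big\rangle\nonumber\\
&\quad+\eta^2\Big\|\sum_{k=1}^K\bar D_k^i\nabla f_{L,k}(\mathbf{w}_k^i,s_k^i)\Big\|^2,\nonumber
\end{align}
and take expectation over the minibatch sampling conditioned on the current iterates. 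Since the stochastic gradients are unbiased and independent across users, the sampling noise in the last term contributes $\eta^2\sum_k(\bar D_k^i)^2\,\mathbb{E}\|\nabla f_{L,k}(\mathbf{w}_k^i)-\nabla f_{L,k}(\mathbf{w}_k^i,s_k^i)\|^2\le\eta^2\sum_k(\bar D_k^i)^2\epsilon_v^2$ by Assumption~\ref{Assumption_sigmak} --- precisely the third term of the target recursion --- while the cross term collapses to $-2\eta\langle\bar{\mathbf{w}}^i-\mathbf{w}^*,\sum_k\bar D_k^i\nabla f_{L,k}(\mathbf{w}_k^i)\rangle$ and the mean part $\eta^2\|\sum_k\bar D_k^i\nabla f_{L,k}(\mathbf{w}_k^i)\|^2$ is bounded by $\eta^2\epsilon_s^2$ via Jensen and Assumption~\ref{Assumption_sigmaG} and absorbed into $A_1$.

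The heart of the argument is the cross term, which I would handle per user by splitting $\bar{\mathbf{w}}^i-\mathbf{w}^*=(\bar{\mathbf{w}}^i-\mathbf{w}_k^i)+(\mathbf{w}_k^i-\mathbf{w}^*)$. Against $\mathbf{w}_k^i-\mathbf{w}^*$, the $\mu$-strong convexity of $f_{L,k}$ (Assumption~\ref{assumption_uconcave}) gives $\langle\nabla f_{L,k}(\mathbf{w}_k^i),\mathbf{w}_k^i-\mathbf{w}^*\rangle\ge f_{L,k}(\mathbf{w}_k^i)-f_{L,k}(\mathbf{w}^*)+\frac{\mu}{2}\|\mathbf{w}_k^i-\mathbf{w}^*\|^2$; weighting by $\bar D_k^i$, summing, and using Jensen ($\sum_k\bar D_k^i\|\mathbf{w}_k^i-\mathbf{w}^*\|^2\ge\|\bar{\mathbf{w}}^i-\mathbf{w}^*\|^2$) turns the leading term of the expansion into the contraction factor $\omega=1-\eta\mu$, while the residual $\sum_k\bar D_k^i(f_{L,k}(\mathbf{w}_k^i)-f_{L,k}(\mathbf{w}^*))$ is written as $(f_{L,k}(\mathbf{w}_k^i)-f_{L,k}(\mathbf{w}_k^*))+(f_{L,k}(\mathbf{w}_k^*)-f_{L,k}(\mathbf{w}^*))$: the first bracket is nonnegative and is partly cancelled using Remark~\ref{assumption_lsmooth_remark_1} ($\|\nabla f_{L,k}(\mathbf{w}_k^i)\|^2\le2L(f_{L,k}(\mathbf{w}_k^i)-f_{L,k}(\mathbf{w}_k^*))$), and the second is controlled by $\frac{L}{2}\|\mathbf{w}^*-\mathbf{w}_k^*\|^2\le\frac{L}{2}\epsilon_w$ via Remark~\ref{assumption_lsmooth_remark_2}; together these furnish the $\epsilon_w$-dependent, $L^2$-weighted part of $A_1$.

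The main obstacle --- as in any multi-local-update analysis --- is the drift term $\langle\nabla f_{L,k}(\mathbf{w}_k^i),\bar{\mathbf{w}}^i-\mathbf{w}_k^i\rangle$, measuring how far the local iterates have diverged from their virtual average since the last aggregation. I would write $\bar{\mathbf{w}}^i-\mathbf{w}_k^i=\sum_j\bar D_j^i(\mathbf{w}_j^i-\mathbf{w}_k^i)$, unroll the local SGD recursion (\ref{local_update}) back to the start of the current aggregation block (at most $I-1$ steps, by (\ref{ak_i_to_n})), so that $\mathbf{w}_j^i-\mathbf{w}_k^i$ is a sum of at most $I-1$ stochastic-gradient differences; Cauchy--Schwarz over those $I-1$ terms together with $\mathbb{E}\|\nabla f_{L,k}(\mathbf{w}_k^i,s_k^i)\|^2\le\epsilon_s^2$ (Assumption~\ref{Assumption_sigmaG}) then gives $\mathbb{E}\|\bar{\mathbf{w}}^i-\mathbf{w}_k^i\|^2=O(I^2\eta^2\epsilon_s^2)$, which is the origin of the $I^2$ factor. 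Inserting this into the cross term through a Young's inequality whose weight is calibrated to $\zeta=2\eta(1-2\eta L)$ --- nonnegative precisely because $\eta\le\frac{1}{2L}$, which is also what keeps all auxiliary coefficients from flipping sign --- produces the $(1+\frac{\zeta}{2\eta})I^2\eta^2\epsilon_s^2$ contribution to $A_1$ while leaving the coefficient of $\|\bar{\mathbf{w}}^i-\mathbf{w}^*\|^2$ equal to $\omega$. Collecting the contraction term, the variance term, and every leftover constant into $A_1$ gives the one-step recursion, and unrolling it as above completes the proof. The bookkeeping point to watch is that $\bar D_k^i$ is constant within an aggregation interval but may jump at the boundaries $i\in\mathcal{I}$, so the drift unrolling must always be restarted at a boundary, and the telescoping of the recursion must respect the per-step weights $(\bar D_k^l)^2$ that appear in the final sum.
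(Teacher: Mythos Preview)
Your proposal is correct and follows essentially the same route as the paper: the $L$-smooth reduction via Remark~\ref{assumption_lsmooth_remark_2}, the mean/variance split of the virtual update (the paper's $A_2,A_3$), the per-user decomposition $\bar{\mathbf{w}}^i-\mathbf{w}^*=(\bar{\mathbf{w}}^i-\mathbf{w}_k^i)+(\mathbf{w}_k^i-\mathbf{w}^*)$, strong convexity plus Jensen for the $\omega$-contraction, the $\epsilon_w$ control via Remark~\ref{assumption_lsmooth_remark_2}, the drift bound $\mathbb{E}\sum_k\bar D_k^i\|\mathbf{w}_k^i-\bar{\mathbf{w}}^i\|^2\le I^2\eta^2\epsilon_s^2$ by unrolling to the last aggregation, and the final telescoping are all exactly the paper's steps. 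The one point where you deviate is the squared mean-gradient term $\eta^2\|\sum_k\bar D_k^i\nabla f_{L,k}(\mathbf{w}_k^i)\|^2$: you bound it directly by $\eta^2\epsilon_s^2$, whereas the paper bounds it via Remark~\ref{assumption_lsmooth_remark_1} as $2L\eta^2\sum_k\bar D_k^i(f_{L,k}(\mathbf{w}_k^i)-f_{L,k}(\mathbf{w}_k^*))$ and merges it with the analogous term coming from Young's inequality on $B_1$; this is what produces the coefficient $4L\eta^2-2\eta=-\zeta$ in the paper's $B_3$ and hence the exact form of $A_1$. Your shortcut still gives a valid upper bound, but if you want the stated $A_1$ verbatim you should route that term through Remark~\ref{assumption_lsmooth_remark_1} instead.
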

\begin{proof}
    See Appendix \ref{Theorem1_proof} for the detailed proof.
\end{proof}Theorem \ref{model_acc_Theorem1} shows that FL model accuracy exhibits a quadratic relationship with $\bar{D}_k^i$, highlighting the critical role of UE data volume and participation in determining accuracy. By applying Theorem \ref{model_acc_Theorem1}, we introduce an FL model accuracy constraint with a threshold $\epsilon_G$ to ensure convergence after $(N-1)I$ model updates (at the end of the task):
\begin{align}\label{model_accuracy_cons}
    &\frac{L}{2}\Biggl[\omega^{(N-1)I}\mathbb{E}\left[\|\bar{\mathbf{w}}^0-\mathbf{w}^*\|^2\right]+A_1\left(\frac{1-\omega^{(N-1)I}}{\eta\mu}\right)\nonumber\\
    &\;+\eta^2\sum_{n=1}^{N-1}\left(\omega^{((N-1)-n)I}\left(\frac{1-\omega^I}{\eta\mu}\right)\sum_{k=1}^K{(\bar D_k[n])^2}\epsilon^2_v\right)\Biggr]\nonumber\\
    &\leq\epsilon_G \, ,
\end{align}
{where $\bar D_k[n]=\frac{a_k[n] D_k}{\sum_{j=1}^Ka_j[n] D_j}$ is derived from $\bar D_k^i$ via (\ref{ak_i_to_n}).}

To ensure adequacy and heterogeneity of learning data, the total data volume contributed by all UEs during the task period $T$ must exceed a predefined threshold $D_{th}$, given by:  
\begin{align}\label{tot_data_size}
    \sum_{n=1}^{N-1} a_k[n]D_k\geq D_{th},\forall k \in \mathcal{K}.
\end{align}
Additionally, we assume that FL training occurs in every time slot. To maintain adequate decentralization and mitigate the influence of individual devices on model convergence, the number of participating FL users per time slot must meet a minimum requirement $a_{min}$, given as:
\begin{align}\label{tot_UE_par}
    \sum_{k=1}^{K} a_k[n]\geq a_{min},\forall n \in \mathcal{N}.
\end{align}

\subsection{Energy Consumption Model}
The energy consumption includes communication and computation. The communication energy consumed by user $k$ when transmitting data to the UAV at time $n$ is given as 
\begin{align}
    E_k^{cm}[n]=t^{cm}p_k[n],\forall n\in \mathcal{N},k\in\mathcal{K},
\end{align}
and the computation energy of user $k$ at time $n$ is
\begin{align}
    E_k^{cp}[n]=a_k[n]D_k I\psi C f_{cpu,k}^2,\forall n\in\mathcal{N},k\in\mathcal{K},
\end{align}
where $\psi$ is the chip coefficient \cite{Yang2021b}, and other related parameters are defined in (\ref{time_cp_user}).

The FL server (i.e., UAV) consumes energy for model aggregation, model broadcasting, and flight. Since the aggregation energy depends on the UAV's CPU frequency and this work focuses on the effect of UE data size on the FL, we omit model aggregation energy from consideration. The energy consumption for model broadcasting is given by
\begin{align}
    E^{bc}[n]=t^{bc}p_{UAV}[n],\forall n\in \mathcal{N}\setminus\{N-1\}.
\end{align}

Let $P^{fly}(v_{UAV})$ represent the power consumption of a rotary-wing UAV flying at speed $v_{UAV}$ {\cite{Zeng2019}}.
The UAV's flight-related energy consumption includes two operations: flying and hovering, given by 
\begin{align}
    &E^{fly}[n]=t^{fly}[n]P^{fly}(v_{UAV}), \forall n\in\mathcal{N},\\
    &E^{hov}[n]=t^{hov}[n]P^{fly}(0), \forall n\in\mathcal{N}.
\end{align}
 In summary, the total energy consumption is given by
\begin{align}
    &E^{tot}=\sum_{n=1}^{N}E^{fly}[n]+\sum_{n=1}^{N-1}\Biggl\{\sum_{k=1}^K E_k^{cp}[n]+E_k^{cm}[n]\nonumber\\
    &\;\;\;\;\;\;\;\;\;\;\;\;\;+E^{hov}[n]\Biggr\}+\sum_{n=1}^{N-2}E^{bc}[n].
\end{align}

\subsection{Problem Formulation}
To minimize the total energy consumption while ensuring a target FL model accuracy, we propose a joint design problem involving the UAV trajectory $\textbf{q}=\{\textbf{q}[n],\forall n \in\mathcal{N}\cup\{0,N\}\}$, UE FL participation $\textbf{a}=\{a_k[n],\forall n\in\mathcal{N},k\in\mathcal{K}\}$, UE transmit power $\textbf{p}_{UE}=\{p_k[n],\forall n\in\mathcal{N},k\in\mathcal{K}\}$, UAV transmit power $\textbf{p}_{UAV}=\{p_{UAV}[n],\forall n\in \mathcal{N}\setminus\{N-1\}\}$, UE local data size $\textbf{D}=\{D_k,\forall k\in\mathcal{K}\}$, and UAV hovering time $\textbf{t}^{hov}=\{t^{hov}[n],\forall n\in\mathcal{N}\}$. The joint design problem is
\begin{align}
\mathbf{(P1)}\;&\min_{\{\textbf q,\textbf{a},\textbf{p}_{UE},\textbf{p}_{UAV},\textbf{D},\textbf{t}^{hov}\}}E^{tot}\nonumber\\
s.t.\;& (\ref{eq_uav_ini}), (\ref{eq_uav_fin}), (\ref{cons_communacation_hovering}), (\ref{cons_time_UE_less_UAV}), (\ref{cons_total_time}), (\ref{UE_power}),\nonumber\\
 &(\ref{UE_tran_cons}), (\ref{UAV_power}), (\ref{UAV_broad_cons}), (\ref{model_accuracy_cons}), (\ref{tot_data_size}), (\ref{tot_UE_par}),\nonumber
\end{align}

The joint design problem (\(\mathbf{P1}\)) is non-convex and involves integer programming due to the FL participation variables. Given the complexity of jointly optimizing all variables, we address this problem in two phases.

\section{Two-phase Convex Optimization Design}\label{2-phase}

In (\(\mathbf{P1}\)), the UE FL participation variable $a_k[n]$ and the UE local data size $D_k$ can be combined into a single new variable, defined as $\mathbf{\tilde D}=\{D_k[n]=a_k[n]D_k,\forall n\in\mathcal{N},k\in\mathcal{K}\}$, where 
\begin{align}
    D_k[n]\in\{0,D_k\}.\label{Data_asso_integrated}
\end{align}

By substituting $(\ref{Data_asso_integrated})$, we replace the participation and data size variables in all constraints of (\(\mathbf{P1}\)). However, the constraints (\ref{UE_tran_cons}) and (\ref{UAV_broad_cons}) depend solely on the participation variables, and we introduce the sign function $sgn(D_k[n])\in\{0,1\}$ to indicate UE participation in FL. Since the sign function is non-convex, we employ an approximation {\cite{Sadeghi2019}}:
\begin{align}
    a_k[n]=sgn(D_k[n])&\approx\frac{e^{2\beta D_k[n]}-1}{e^{2\beta D_k[n]}+1} \nonumber \\ 
    &\triangleq \tilde a_k[n],\forall n\in\mathcal{N},k\in\mathcal{K},\label{sgn_tilde_a}
\end{align}
where $\beta >0$ controls the approximation accuracy—the larger the $\beta$, the closer the function approximates the true sign function. Here we set $\beta=5$. Moreover, $\tilde{a}_k[n]$ is concave for $D_k[n] >0$. Hence, the joint design problem ($\textbf{P1}$) can be equivalently rewritten as
\begin{align}
&\qquad\qquad\qquad\mathbf{(P2)}\;\min_{\{\textbf q,\textbf{p}_{UE},\textbf{p}_{UAV},{\textbf{D}},\mathbf{\tilde D},\textbf{t}^{hov}\}} {E^{tot}}\nonumber\\
s.t.\;& (\ref{eq_uav_ini}), (\ref{eq_uav_fin}),(\ref{cons_communacation_hovering}) , (\ref{cons_total_time}), (\ref{UE_power}), (\ref{UAV_power}), {(\ref{Data_asso_integrated})},\nonumber\\
&t^{cm}+D_k[n]\Phi_k \leq t^{fly}[n]+t^{hov}[n], \forall k\in \mathcal{K},\forall n\in \mathcal{N}, \label{Phase1_timeslot}\\
&\tilde a_k[n]\mathcal{Q}\leq t^{cm}R_k[n],\forall k \in \mathcal{K},\forall n\in\mathcal{N}, \label{Phase1_com_cons}\\
&\tilde a_k[n+1]\mathcal{Q}\leq t^{bc}R_k^{bc}[n],\forall k \in \mathcal{K},\forall n\in \mathcal{N}\setminus\{N-1\},\label{Phase1_broadcast_cons} \\ 
&\sum_{n=1}^{N-1} D_k[n]\geq D_{th},\forall k \in \mathcal{K},\label{Phase1_DataSize}\\
&\frac{L}{2}\Biggl[\omega^{(N-1)I}\mathbb{E}\left[\|\bar{\mathbf{w}}^0-\mathbf{w}^*\|^2\right]+A_1\left(\frac{1-\omega^{(N-1)I}}{\eta\mu}\right)\nonumber\\
    &+\eta^2\sum_{n=1}^{N-1}\left(\omega^{((N-1)-n)I}\left(\frac{1-\omega^I}{\eta\mu}\right)\sum_{k=1}^K(\tilde D_k[n])^2\epsilon^2_v\right)\Biggr]\nonumber\\
    &\leq\epsilon_G, \label{Phase1_acc}\\
    &\sum_{k=1}^{K} \tilde{a}_k[n]\geq a_{min},\forall n \in \mathcal{N}, \label{Phase1_UE_par}
\end{align}
where {$E^{cp}_k= D_k[n]I\psi C f_{cpu,k}^2$} in $E^{tot}$, and $\tilde D_k[n]={D_k[n]}{(\sum^K_{j=1}D_j[n])^{-1}}$. 

\vphantom{\hphantom{Although we combine the UE FL participation variable $a_k[n]$ with the UE local data size variable $D_k$ to produce a new variable $D_k[n]$, the joint design problem ($\mathbf{P2}$) is still non-convex, so we will relax the constraints and perform a two-phase optimization.}}

\subsection{Phase I: Optimization with Relaxed Data Size Restrictions}
{In Phase I, we firstly relax the restriction on the binary discrete size of the data (\ref{Data_asso_integrated}), yielding
\begin{align}
    D_k[n]\geq0,\forall n\in\mathcal{N},k\in\mathcal{K}. \label{Phase1_relaxDk}
\end{align}
Hence, the relaxed problem of ($\textbf{P2}$) becomes
\begin{align}
\mathbf{(P3)}\;&\min_{\{\textbf q,\textbf{p}_{UE},\textbf{p}_{UAV},\mathbf{\tilde D},\textbf{t}^{hov}\}}E^{tot}\nonumber\\
s.t.\;& (\ref{eq_uav_ini}), (\ref{eq_uav_fin}), (\ref{cons_communacation_hovering}),(\ref{cons_total_time}), (\ref{UE_power}), (\ref{UAV_power}),\nonumber\\
&(\ref{Phase1_timeslot}), (\ref{Phase1_com_cons}), (\ref{Phase1_broadcast_cons}), (\ref{Phase1_DataSize}), (\ref{Phase1_acc}), (\ref{Phase1_UE_par}), (\ref{Phase1_relaxDk}),\nonumber
\end{align}
where the data size $D_k$ is no longer an optimization variable due to the relaxation of (\ref{Data_asso_integrated}) in (\ref{Phase1_relaxDk}). The problem ($\mathbf{P3}$) remains non-convex due to the constraints (\ref{Phase1_timeslot})-(\ref{Phase1_broadcast_cons}) and (\ref{Phase1_acc}). We next convert these constraints into convex ones.}

We first introduce an auxiliary variable $d^{lb}[n]$ which satisfies
\begin{align}
    d^{lb}[n]\leq\|\textbf{q}[n]-\textbf{q}[n-1]\|,\forall n\in\mathcal{N}. \label{d_lb}
\end{align}
By using (\ref{d_lb}), the constraint (\ref{Phase1_timeslot}) can be replaced with a lower bound constraint:
\begin{align}
    t^{cm}+D_k[n]\Phi_k\leq \frac{d^{lb}[n]}{v_{UAV}}+t^{hov}[n].\label{timeslot_rewrite}
\end{align}
To address the non-convex constraint (\ref{d_lb}), we apply a first-order Taylor expansion. Since $\|\textbf{q}[n]-\textbf{q}[n-1]\|^2$ is convex in both $\textbf{q}[n]$ and $\textbf{q}[n-1]$, we replace the constraint (\ref{d_lb}) by applying its first-order Taylor expansion at a given point $\textbf{q}^r[n]$, yielding a convex lower bound constraint: 
\begin{align}
    &(d^{lb}[n])^2\leq\|\textbf{q}^r[n]-\textbf{q}^r[n-1]\|^2\nonumber\\
    &\qquad\qquad+2(\textbf{q}^r[n]-\textbf{q}^r[n-1])^T(\textbf{q}[n]-\textbf{q}^r[n])\nonumber\\
    &\qquad\qquad-2(\textbf{q}^r[n]-\textbf{q}^r[n-1])^T(\textbf{q}[n-1]-\textbf{q}^r[n-1]),\nonumber\\
    &\qquad\qquad\qquad\qquad\qquad\qquad\qquad\quad\forall n\in\mathcal{N},k\in\mathcal{K}.\label{con_1st_order_timeslot}
\end{align}

Since the transmission rate $R_k[n]$ in (\ref{Phase1_com_cons}) is neither convex nor concave in $\textbf{q}[n]$ and $p_k[n]$, we convexify (\ref{Phase1_com_cons}) by finding a concave lower bound for $R_k[n]$. Following {\cite{Fu2024}}, we introduce two auxiliary variables, $A_k[n]$ and $B_k[n]$, given by
\begin{align}
    exp(A_k[n])=\tilde{g}_k[n],\forall n\in\mathcal{N},k\in\mathcal{K};\label{expA}\\
    exp(B_k[n])=p_k[n],\forall n\in\mathcal{N},k\in\mathcal{K}.\label{expB}
\end{align}
Then we express $R_k[n]$ as a difference of two logarithmic functions as $R_k[n]=\frac{W}{ln(2)}(R_1[n]+R_{2,k}[n])$, where we define $R_1[n]\triangleq\ln\left( \sum_{i=1}^{K}e^{B_i[n]+A_i[n]}+\sigma_z^2 \right)$ and $R_{2,k}[n]\triangleq-\ln\left( \sum_{i=1,i\neq k}^{K}e^{B_i[n]+A_i[n]}+\sigma_z^2 \right)$. In the following, we separately derive concave lower bounds for $R_1[n]$ and $R_{2,k}[n]$ in terms of the trajectory variable $\textbf{q}[n]$. 

Given any ${\textbf q}^r[n]$, $R_1[n]$ is lower bounded by a concave function $R_1^{lb}[n]$ in terms of $\textbf q[n]$ and $B_k[n]$:
\begin{align}
    &R_1[n]\geq R_{1}[n]\biggl|_{\textbf q[n]=\textbf q^r[n],B_k[n]=B_k^r[n]}\nonumber\\
    &\;+\sum_{i=1}^K \frac{e^{B_i^r[n]+A_i^r[n]}}{\sum_{j=1}^K e^{B_j^r[n]+A_j^r[n]}+\sigma^2_z}\left(A^{lb}_i[n]-A_i^r[n]\right)\nonumber\\
    &\;+\sum_{i=1}^K \frac{e^{B_i^r[n]+A_i^r[n]}}{\sum_{j=1}^K e^{B_j^r[n]+A_j^r[n]}+\sigma^2_z}\left(B_i[n]-B_i^r[n]\right)\nonumber\\
    &\;\;\;\;\triangleq R_1^{lb}[n],\forall n\in\mathcal{N},k\in\mathcal{K}\label{R_1_llb},
\end{align}
where $A_k^r[n]$ and $B_k^r[n]$ are calculated from $\textbf{q}[n]$ and $B_k[n]$  with $\textbf q[n]=\textbf q^r[n], B_k[n]=B_k^r[n]$, and other terms are given by
    \begin{align}
    &A^{lb}_i[n]=\ln\left(\frac{\left(c(4\pi f_c)^{-1}\right)^2}{S_i^r[n]}\right)\nonumber\\
    &\qquad\quad-\frac{\left(\|\textbf{q}[n]-\textbf{g}_{UE,i}\|^2+H^2-S_i^r[n]\right)}{S_i^r[n]};\label{Alb_def}\\
    &S_i^r[n]=\|\textbf{q}^r[n]-\textbf{g}_{UE,i}\|^2+H^2.\label{S_i_r}
\end{align}

Next, from (\ref{eq_pathloss}) and (\ref{eq_SINR}), $\tilde{g}_k[n]$ is non-convex in $\textbf{q}[n]$. An auxiliary variable $\tilde{A}_k[n]$ is introduced to {ensure $e^{A_k[n]}=\tilde{g}_k[n] \leq e^{\tilde A_k[n]}$}, yielding 
\begin{align}
        \frac{\|\textbf{q}[n]-\textbf{g}_{UE,k}\|^2+H^2}{\left(c(4\pi f_c)^{-1}\right)^2}\geq e^{-\tilde A_k[n]}.\label{aux_A_rewrite1}
\end{align}
Hence, the rate formula $R_{2,k}[n]$ is lower bounded by a concave function $R_{2,k}^{lb}[n]$:
\begin{align}
    R_{2,k}[n]&\geq-\ln\left(\sum_{i=1,i\neq k}^{K}e^{B_i[n]+\tilde A_i[n]}+\sigma_z^2\right)\nonumber\\
              &\triangleq R_{2,k}^{lb}[n],\forall n\in\mathcal{N},k\in\mathcal{K}.
\end{align}}The imposed constraint (\ref{aux_A_rewrite1}) is, however, non-convex, and we apply a first-order Taylor expansion for the left-hand-side quadratic function at a given point $\textbf{q}[n]=\textbf{q}^r[n]$, leading to a convex lower bound constraint:
\begin{align}
    &\frac{\|\textbf{q}^r[n]-\textbf{g}_{UE,k}\|^2+2(\textbf{q}^r[n]-\textbf{g}_{UE,k})^T(\textbf{q}[n]-\textbf{q}^r[n])+H^2}{\left(c(4\pi f_c)^{-1}\right)^2}\nonumber\\
    &\qquad\qquad\qquad\geq e^{-\tilde A_k[n]}, \forall n\in\mathcal{N},k\in\mathcal{K}. \label{aux_A_lb}
\end{align}

By replacing $R_1[n]$ and $R_{2,k}[n]$ in $R_k[n]$ with the derived concave lower bounds $R_1^{lb}[n]$ and $R_{2,k}^{lb}[n]$, the constraint (\ref{Phase1_com_cons}) can be rewritten as
\begin{align}
    \tilde a_k[n]\mathcal{Q}\leq \frac{t^{cm}W}{\ln(2)}\left(R_1^{lb}[n]+R_{2,k}^{lb}[n]\right),\forall k \in \mathcal{K},\forall n\in\mathcal{N}.\label{replace_r1r2}
\end{align}
From (\ref{sgn_tilde_a}), $\tilde a_k[n]$ is concave for $D_k[n]\geq0$, rendering constraint (\ref{replace_r1r2}) non-convex. To convexify it, we apply a first-order Taylor expansion at a given point $D_k^r[n]$, resulting in
\begin{align}
    \tilde a_k[n]&\leq\frac{e^{(2\beta D_k^r[n])}-1}{e^{(2\beta D_k^r[n])}+1}+\frac{4\beta e^{(2\beta D_k^r[n])}}{\left(e^{(2\beta D_k^r[n])}+1 \right)^2}(D_k[n]-D_k^r[n])\nonumber\\
            &\triangleq \check a_k[n],\forall n\in\mathcal{N},k\in\mathcal{K}.\label{check_a}
\end{align}
By utilizing (\ref{check_a}), the contraint (\ref{replace_r1r2}) can be convexified as
\begin{align}
    \check a_k[n]\tilde{\mathcal{Q}}\leq R_1^{lb}[n]+R_{2,k}^{lb}[n],\forall k \in \mathcal{K},\forall n\in\mathcal{N},\label{com_approx_con}
\end{align}
where $\tilde{\mathcal{Q}}=\frac{\mathcal{Q}\ln(2)}{t^{cm}W}$.

Next, we deal with the non-convex constraint (\ref{Phase1_broadcast_cons}). By defining a variable $C[n]$, where $exp(C[n])=p_{UAV}[n]$, for $n\in \mathcal{N}\setminus\{N-1\}$, the constraint (\ref{Phase1_broadcast_cons}) can be convexified by
    \begin{align}
        &\ln\left(e^{(\tilde a_k^r[n+1]\hat{\mathcal{Q}})}-1\right)+\frac{\hat{\mathcal{Q}}e^{\hat{\mathcal{Q}}\tilde a_k^r[n+1]}}{e^{\hat{\mathcal{Q}}\tilde a_k^r[n+1]}-1}\left(\check{a}_k[n+1]-\tilde a_k^r[n+1]\right)\nonumber\\
        &\leq C[n]+A^{lb}_k[n]-\ln\left(\sigma_z^2\right),\forall n\in\mathcal{N}\setminus\{N-1\},k\in\mathcal{K},\label{bro_approx_con}
    \end{align}
    where $\hat{\mathcal{Q}}=\frac{\mathcal{Q}\ln(2)}{t^{bc}W}$, $\tilde a_k^r[n]=\frac{e^{(2\beta D_k^r[n])}-1}{e^{(2\beta D_k^r[n])}+1}$.

In the model accuracy constraint (\ref{Phase1_acc}), $\tilde D_k[n]$ is non-convex in $D_k[n]$ for $k\in\mathcal{K}$. To address this, we introduce an auxiliary variable $\hat D[n]$ satisfying the constraint:
\begin{align}
    \hat D[n]\leq\left(\sum_{j=1}^K D_j[n]\right)^2,\forall n \in \mathcal{N}.\label{aux_D}
\end{align}
Applying (\ref{aux_D}), we replace the original model accuracy constraint (\ref{Phase1_acc}) with the following upper bound:
\begin{align}
    &\frac{L}{2}\Biggl[\omega^{(N-1)I}\mathbb{E}\|\bar{\mathbf{w}}^0-\mathbf{w}^*\|^2+A_1\left(\frac{1-\omega^{(N-1)I}}{\eta\mu}\right) + \label{Phase1_acc_up}\\
    &\eta^2\sum_{n=1}^{N-1}\left(\omega^{((N-1)-n)I}\left(\frac{1-\omega^I}{\eta\mu}\right)\sum_{k=1}^K\frac{(D_k[n])^2}{\hat{D}[n]}\epsilon^2_v\right)\Biggr]\leq\epsilon_G. \nonumber 
\end{align}
The constraint (\ref{Phase1_acc_up}) is now convex, since the function $\frac{(D_k[n])^2}{\hat{D}[n]}$ is convex in $D_k[n]$ and $\hat{D}[n]>0$. Next, we deal with the non-convex constraints (\ref{aux_D}). Noting that $\left(\sum_{j=1}^K D_j[n]\right)^2$ is convex in $D_j[n]$, the constraint (\ref{aux_D}) is then convexified by using its first-order Taylor expansion at a given point $D_j[n]=D_j^r[n]$:
\begin{align}
    &\hat D[n]\leq \left(\sum_{j=1}^K D_j^r[n]\right)^2 \label{aux_D_lb} \\
    &+\sum_{j'=1}^K\left(2\left(\sum_{j''=1}^K D_{j''}^r[n]\right)\left(D_{j'}[n]-D_{j'}^r[n]\right)\right), \forall n\in\mathcal{N}. \nonumber
\end{align}

With the transformed convex constraints and introduced auxiliary variables, the problem $\mathbf{(P3)}$ can be transformed as
\begin{align}
&\mathbf{(P4)}\;\min_{\{\textbf q,\tilde{\textbf{A}},\textbf{B},\textbf{C},\mathbf{\tilde D},\mathbf{\hat D},\textbf{t}^{hov}\}} {E^{tot}}\nonumber\\
s.t.\;& (\ref{eq_uav_ini}), (\ref{eq_uav_fin}), (\ref{cons_communacation_hovering}),(\ref{cons_total_time}), (\ref{UE_power}), (\ref{UAV_power}), (\ref{Phase1_DataSize}), (\ref{Phase1_UE_par}),\nonumber\\
&(\ref{Phase1_relaxDk}), (\ref{timeslot_rewrite}), (\ref{con_1st_order_timeslot}), (\ref{aux_A_lb}), (\ref{com_approx_con}), (\ref{bro_approx_con}), (\ref{Phase1_acc_up}), (\ref{aux_D_lb}),\nonumber
\end{align}
where $\tilde{\textbf{A}}=\{\tilde A_k[n],\forall n\in\mathcal{N},k\in\mathcal{K} \}$, $\textbf{B}=\{B_k[n],\forall n\in\mathcal{N},k\in\mathcal{K}\}$, $\textbf{C}=\{C[n],\forall n\in \mathcal{N}\setminus\{N-1\}\}$, $\hat{\textbf{D}}=\{\hat D[n],\forall n\in\mathcal{N}\}$. With relaxed data size, the joint design problem ($\mathbf{P3}$) can be solved via ($\mathbf{P4}$) for given $\textbf{q}^r[n]$, $B^r_k[n]$ and $D^r_k[n]$. Utilizing the SCA method {\cite{Razaviyayn2014}} and the CVX optimization tool {\cite{Grant2014}}, we iteratively solve ($\mathbf{P4}$) to determine the UAV trajectory, UE and UAV transmit power, relaxed data size, and UAV hovering time.

\subsection{Phase II: Re-optimization with Data Size Restrictions}
The data size relaxation in Phase I allows UE data sizes to vary across time slots. In Phase II, We refine the solution by considering the data size restriction. As $D_k[n]$ in Phase I encapsulates the UE FL participation $a_k[n]$ and UE local data $D_k$, it reflects not only the UE FL participation but also the outcomes of joint optimization over several variables (e.g., UAV trajectory, transmit power). In Phase II, we use $D_k[n]$ to infer the UE participation and then re-optimize $D_k$ accordingly. From (\ref{sgn_tilde_a}), ${a}_k[n]$ is computed and quantized as:
\begin{align}
a_k[n]=\left\{
\begin{aligned}
     &1,\;\;\tilde a_k[n]\geq 0.5 \;;\\
     &0,\;\; otherwise.\\
\end{aligned}
\right.\label{ak_redefine}
\end{align}

{With ($\mathbf{P4}$), we optimize the data size $\textbf{D}$ under fixed $a_k[n]$ using (\ref{ak_redefine}) while also re-optimizing $\textbf{q}$, $\textbf{p}_{UE}$, $\textbf{P}_{UAV}$, and $\textbf{t}^{hov}$. To this end, we first replace all constraints involving $D_k[n]$ from Phase I with $a_k[n]D_k$, except for (\ref{com_approx_con}) and (\ref{bro_approx_con}). Instead, these two constraints are traced back to (\ref{Phase1_com_cons}) and (\ref{Phase1_broadcast_cons}). By replacing $\tilde a_k[n]$ directly with $a_k[n]$ in (\ref{Phase1_com_cons}) and (\ref{Phase1_broadcast_cons}), they can be convexified using the same approach as in (\ref{replace_r1r2}) and (\ref{bro_approx_con}), resulting in:
\begin{align}
    &a_k[n]\mathcal{Q}\leq \frac{t^{cm}W}{\ln(2)}\left(R_1^{lb}[n]+R_{2,k}^{lb}[n]\right),\forall k \in \mathcal{K},\forall n\in\mathcal{N};\label{phase2_UAV_com_cons_re}\\
    &\ln\left(e^{(a_k[n+1]\hat{\mathcal{Q}})}-1\right)\leq C[n]+A^{lb}_k[n]-\ln\left(\sigma_z^2\right),\nonumber\\
    &\qquad\qquad\qquad\qquad\qquad\forall n\in\mathcal{N}\setminus\{N-1\},k\in\mathcal{K}.\label{phase2_UAV_broad_cons_re}
\end{align}
By leveraging the SCA method {\cite{Razaviyayn2014}} and CVX tool {\cite{Grant2014}}, the UAV trajectory, UE and UAV transmit power, UAV hovering time, and data size can be jointly re-optimized for fixed points of $\textbf{q}^r[n]$, $B_k^r[n]$, and $D_k^r$.}

\section{Numerical Simulation}\label{Simulation}
\subsection{Simulation Settings}

We consider a UAV flying over a $600$ m $\times$ $600$ m area at a fixed altitude of $150$ m. The UAV is initialized at two possible positions: $[0,300,150]$ m and $[600,300,150]$ m. The maximum UAV velocity is $10$ m/sec. The number of UEs is set to $K = 6$, with coordinates given by $\textbf g_{UE,1}=[0,400]^T$, $\textbf g_{UE,2}=[100,600]^T$, $\textbf g_{UE,3}=[100,400]^T$, $\textbf g_{UE,4}=[400,600]^T$, $\textbf g_{UE,5}=[400,400]^T$, $\textbf g_{UE,6}=[500,400]^T$. The mission duration is set to $T = 500$ s, divided into $50$ time slots. The transmission time, aggregation time, and broadcast time are respectively set to $t_{cm}=2$ sec,  $t_{agg}=0.5$ sec, and $t_{bc}=0.5$ sec. The maximum transmit power for the UAV and UEs are $p_{uav}^{max}=30$ dBm and $p_{UE}^{max}=31.8$ dBm. The UAV’s onboard CPU operates at $2$ GHz with a chip coefficient of $\zeta=10^{-25}$ and a computational resource requirement of $C=10$ for computing one bit. The system operates at a $2.4$ GHz carrier frequency and a $20$ MHz bandwidth. The noise power is $\sigma^2_z$ is $-80$ dBm. The parameters of the FL loss function and UAV's flight-related energy consumption are set to be the same as in [xx]. Additionally, the minimum number of participating UEs is $a_{min}=2$, and the minimum total data requirement for FL is $D_{th}=50$ Mb. {The FL model accuracy threshold is $\epsilon_{G} = 10$.} Unless otherwise stated, the above values serve as the default settings.

\subsection{Simulation Results}

\begin{figure}[htbp]
\centering
\includegraphics[width=0.45\textwidth]{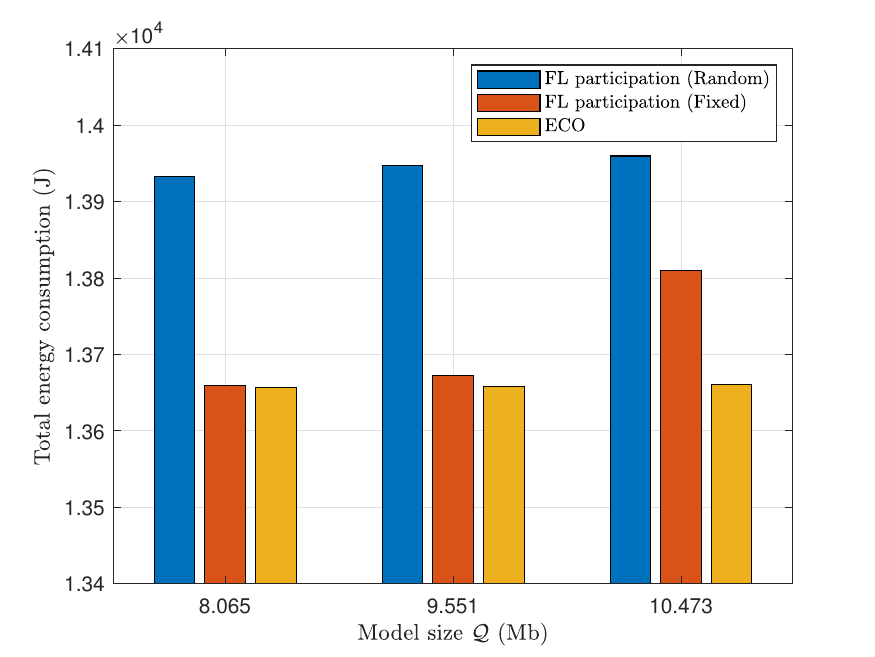}
\caption{Performance of different FL participation design methods.}
\label{fig_baseline_compare}
\end{figure}

{
To access the impact of FL participation, Fig.  \ref{fig_baseline_compare} compares the following baseline methods: (1) random participation {\cite{Zeng2020}}, where two UEs are randomly selected per time slot; and (2) fixed participation {\cite{Yang2021b}}, where all UEs participate in every time slot. For the ECO method, the initial solution assumes full participation by all UEs. The results show that when the model size is small ($\mathcal{Q} = 8.065$ Mb), the performance gap between the fixed method and ECO is small. As the model size increases, the ECO method outperforms both baselines in terms of energy efficiency, highlighting the growing benefit of optimizing FL participation under larger model sizes.

}

\begin{figure}[htbp]
\centering
\includegraphics[width=0.45\textwidth]{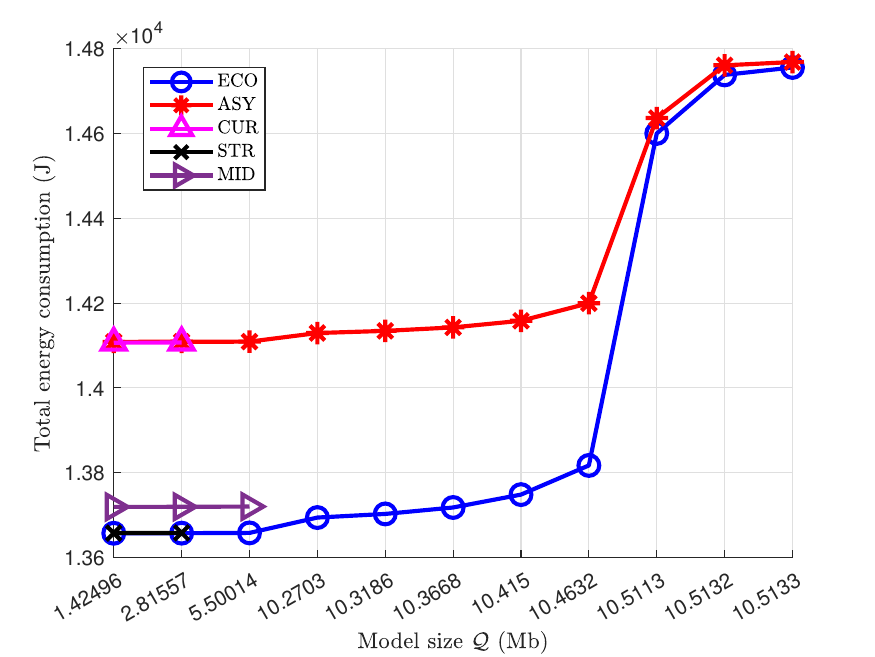}
\caption{Energy consumption of different fixed UAV trajectory designs and proposed ECO method for various FL model sizes.}
\label{fig_diff_method}
\end{figure}

To evaluate the impact of UAV trajectory design, Fig. \ref{fig_diff_method} compares the proposed ECO method with four heuristic trajectories: (1) Curve (CUR), (2) Straight (STR), (3) Middle (MID), and (4) Asymptotic (ASY), as illustrated in Fig. \ref{fig_traj}. For these methods, the UAV trajectory is fixed, while the remaining variables (UE FL participation, UAV and UE transmit power, and UE local data size, UAV hovering time) are jointly optimized using the SCA. 

The results show that the proposed ECO method consistently achieves the lowest energy consumption across various FL model sizes. Although the STR trajectory performs comparably to the ECO method for small model sizes (e.g., $\mathcal{Q} = 1.42496$ Mb and $2.81557$ Mb), it becomes infeasible for larger model sizes (e.g., $\mathcal{Q}$ ranges between $5.50014$ Mb and $10.5133$ Mb) due to constraint violations.

\begin{figure}[htbp]
\centering
\includegraphics[width=0.45\textwidth]{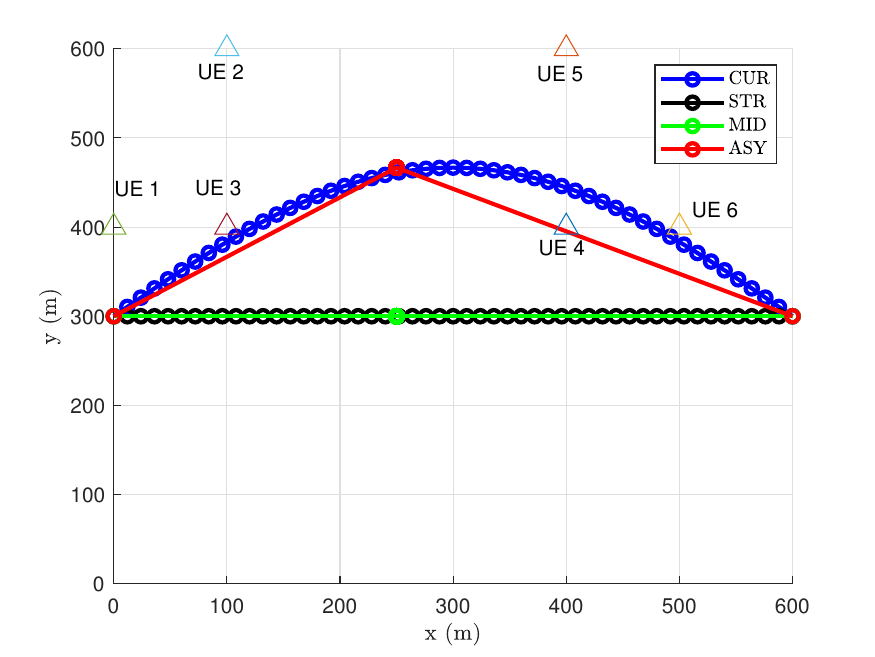}
\caption{The UAV trajectories of different heuristic methods.}
\label{fig_traj}
\end{figure}

\section{Conclusion}\label{Conclusion}
{This paper proposed a UAV-assisted FL framework to minimize total system energy consumption through the joint design of UAV trajectory, user participation, power control, and data volume allocation. A convergence analysis of the FL model with multiple local updates was conducted to examine the impact of user participation and data volume control on model learning accuracy, thereby providing a theoretical foundation for energy-efficient resource management. To tackle the resulting non-convex problem, we employed an SCA approach. Simulation results verified that the proposed ECO methods significantly outperform compared methods in terms of energy consumption and model convergence.}
\appendices
\section{Proof of Theorem \ref{model_acc_Theorem1}}\label{Theorem1_proof}
\setcounter{equation}{0}
\renewcommand\theequation{A.\arabic{equation}}
{
Using the $L$-smooth of $f_G(\mathbf{w})$ from Remark \ref{assumption_lsmooth_remark_2}, we have
    \begin{align}\label{L-smooth_simplified}
        \mathbb{E}\big[f_G(\bar{\mathbf{w}}^{i+1})-f_G(\mathbf{w}^*)\big]\leq \frac{L}{2} \mathbb{E}\big[\|\bar{\mathbf{w}}^{i+1}-\mathbf{w}^*\|^2 \big].
    \end{align}An upper bound for $\frac{L}{2}\mathbb{E}\big[\|\bar{\mathbf{w}}^{i+1}-\mathbf{w}^*\|^2$\big] is then provided.
}
\begin{theorem}\label{modeliter_modelopt_upper_bound}
 Under Assumptions 1-4 and a learning rate $\eta\leq \frac{1}{2L}$, $\mathbb{E}\big[\|\bar{\mathbf{w}}^{i+1}-\mathbf{w}^*\|^2 \big]$ can be upper bounded by
\begin{align}
    &\mathbb{E}\big[\|\bar{\mathbf{w}}^{i+1}-\mathbf{w}^*\|^2 \big] \leq\omega^{i+1}\mathbb{E}\big[\|\bar{\mathbf{w}}^0-\mathbf{w}^*\|^2\big]+A_1\left(\frac{1-\omega^{i+1}}{\eta\mu}\right)\nonumber\\
    &\;\;\;\;\;\;\;\;\;\;\;\;\;\;\;\;\;\;+\eta^2\sum_{l=0}^{i}\left(\omega^{i-l}\sum_{k=1}^K(\bar D_k^{l})^2\epsilon^2_v\right),\label{accuracy_proof}
\end{align}
where $\omega=1-\eta\mu$, $A_1=(1+\frac{\zeta}{2\eta})I^2\eta^2\epsilon_s^2+\frac{\eta L^2\epsilon_w}{2}(\zeta+4\eta)$, $\zeta=2\eta(1-\eta2L)$, and $\bar D_k^i=\frac{a_k^i D_k}{\sum_{j=1}^Ka_j^i D_j}$.
\end{theorem}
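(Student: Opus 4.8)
\textbf{Proof proposal for Theorem \ref{modeliter_modelopt_upper_bound}.}

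The plan is to bound the one-step progress of the virtual sequence $\bar{\mathbf{w}}^{i}$ and then unroll the resulting recursion. Starting from the virtual update \eqref{virtual_update}, I would write
$\|\bar{\mathbf{w}}^{i+1}-\mathbf{w}^*\|^2 = \|\bar{\mathbf{w}}^{i}-\mathbf{w}^* - \eta \sum_k \bar D_k^i \nabla f_{L,k}(\mathbf{w}_k^i,s_k^i)\|^2$ and expand the square into three pieces: the term $\|\bar{\mathbf{w}}^{i}-\mathbf{w}^*\|^2$, the cross term $-2\eta\langle \bar{\mathbf{w}}^i-\mathbf{w}^*, \sum_k \bar D_k^i \nabla f_{L,k}(\mathbf{w}_k^i,s_k^i)\rangle$, and the quadratic term $\eta^2\|\sum_k \bar D_k^i \nabla f_{L,k}(\mathbf{w}_k^i,s_k^i)\|^2$. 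Taking the conditional expectation over the sampling $s_k^i$ replaces stochastic gradients by true gradients in the cross term (incurring no extra cost) and, in the quadratic term, splits into a ``variance'' part controlled by Assumption \ref{Assumption_sigmak} — this is exactly where the $\eta^2\sum_k(\bar D_k^i)^2\epsilon_v^2$ contribution enters — plus a squared-norm-of-mean part that I would bound using Assumption \ref{Assumption_sigmaG} and the $L$-smoothness/strong-convexity machinery.

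The key structural difficulty is that the local models $\mathbf{w}_k^i$ have drifted away from the virtual average $\bar{\mathbf{w}}^i$ because of the $I-1$ intermediate local updates without aggregation. So the heart of the argument is a ``client-drift'' lemma: bounding $\|\mathbf{w}_k^i-\bar{\mathbf{w}}^i\|^2$ (or the aggregated version $\sum_k \bar D_k^i\|\mathbf{w}_k^i-\bar{\mathbf{w}}^i\|^2$) in terms of the number of local steps since the last aggregation, the step size $\eta$, and the gradient-magnitude bound $\epsilon_s^2$ from Assumption \ref{Assumption_sigmaG}; the $I^2\eta^2\epsilon_s^2$ factor appearing in $A_1$ is the signature of summing $I$ local steps each of size $O(\eta\epsilon_s)$ and then squaring. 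One also needs the bound $\|\mathbf{w}^*-\mathbf{w}_k^*\|^2\le\epsilon_w$ together with Remark \ref{assumption_lsmooth_remark_2} to relate $\nabla f_{L,k}(\mathbf{w}_k^i)$ back to quantities measured at $\mathbf{w}^*$; this produces the $\frac{\eta L^2\epsilon_w}{2}(\zeta+4\eta)$ term. The auxiliary quantity $\zeta=2\eta(1-2\eta L)$ is nonnegative precisely because $\eta\le\frac{1}{2L}$, which is where that hypothesis gets used, and it serves as the ``slack'' coefficient when I apply Young's inequality ($2\langle u,v\rangle\le \frac{1}{c}\|u\|^2 + c\|v\|^2$) to split the cross term into a piece that can be absorbed by strong convexity and a piece that feeds the drift bound.

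Assembling these estimates, the cross term handled via $\mu$-strong convexity of $f_G$ (Assumption \ref{assumption_uconcave}, applied at $\bar{\mathbf{w}}^i$ and $\mathbf{w}^*$, using $\nabla f_G(\mathbf{w}^*)=0$) yields a contraction factor $\omega=1-\eta\mu$ on $\|\bar{\mathbf{w}}^i-\mathbf{w}^*\|^2$, while the drift and variance terms collect into the additive constant $A_1$ and the per-step variance sum. The resulting recursion has the form
\begin{align}
\mathbb{E}\big[\|\bar{\mathbf{w}}^{i+1}-\mathbf{w}^*\|^2\big] \le \omega\,\mathbb{E}\big[\|\bar{\mathbf{w}}^{i}-\mathbf{w}^*\|^2\big] + A_1 + \eta^2\sum_{k=1}^K(\bar D_k^i)^2\epsilon_v^2. \nonumber
\end{align}
Unrolling this from step $0$ to step $i$ gives the geometric factor $\omega^{i+1}$ on the initial gap, the telescoped constant term $A_1\sum_{l=0}^{i}\omega^l = A_1\frac{1-\omega^{i+1}}{1-\omega}=A_1\frac{1-\omega^{i+1}}{\eta\mu}$, and the weighted variance sum $\eta^2\sum_{l=0}^{i}\omega^{i-l}\sum_k(\bar D_k^l)^2\epsilon_v^2$, which is exactly \eqref{accuracy_proof}. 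The main obstacle — and the step I would spend the most care on — is the client-drift bound and, relatedly, bookkeeping which index $i$ lies in which aggregation window $\mathcal{I}$ so that the drift resets correctly at aggregation points; everything after that is a mechanical recursion unroll.
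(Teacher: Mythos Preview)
Your proposal is correct and follows essentially the same route as the paper's proof: the paper also centers the stochastic gradients about the true gradients (your conditional-expectation step is its $A_2/A_3$ split), decomposes the cross term into a drift piece and a strong-convexity piece (its $B_1/B_2$, handled via the same Young-type inequality you describe), bounds the client drift $\sum_k\bar D_k^i\|\mathbf{w}_k^i-\bar{\mathbf{w}}^i\|^2$ by $I^2\eta^2\epsilon_s^2$ via Assumption~\ref{Assumption_sigmaG} with the case split on whether $i+1\in\mathcal I$, and then unrolls the same one-step recursion. The only cosmetic difference is that the paper applies $\mu$-strong convexity to each $f_{L,k}$ at the local point $\mathbf{w}_k^i$ (then Jensen to pass to $\bar{\mathbf{w}}^i$), rather than to $f_G$ at $\bar{\mathbf{w}}^i$ as you suggest; since the cross term carries $\nabla f_{L,k}(\mathbf{w}_k^i)$ rather than $\nabla f_G(\bar{\mathbf{w}}^i)$, the paper's localization is slightly more direct, but after the drift correction both organizations yield the identical recursion you wrote down.
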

\begin{proof}
See Appendix \ref{modeliter_modelopt_upper_bound_proof} for the detailed proof.
\end{proof}

By substituting Theorem \ref{modeliter_modelopt_upper_bound} into (\ref{L-smooth_simplified}), an upper bound on the FL model accuracy is obtained as expressed in (\ref{Theo1_acc}).

{}

\section{Proof of Theorem \ref{modeliter_modelopt_upper_bound}}\label{modeliter_modelopt_upper_bound_proof}
From {(\ref{virtual_update})}, the expected difference between the weighted model $\bar{\textbf{w}}^{i+1}$ and the optimal FL model $\textbf{w}^*$ is given as
\setcounter{equation}{0}
\renewcommand\theequation{B.\arabic{equation}}
\begin{align}\label{weight_optimal_expression}
    &\mathbb{E}\left[\|\bar{\mathbf{w}}^{i+1}-\mathbf{w}^*\|^2\right]
    \nonumber\\
    &=\mathbb{E}\left[\left\|\left(\bar{\mathbf{w}}^i-\eta\sum_{k=1}^K\bar D_k^i\nabla f_{L,k}(\mathbf{w}_k^i,s_k^i)\right)-\mathbf{w}^*\right\|^2\right]\nonumber\\
    &=\mathbb{E}\Biggl[\Biggl\| \underbrace{\bar{\mathbf{w}}^i-\mathbf{w}^*- \sum_{k=1}^K\bar D_k^i\left(\eta\nabla f_{L,k}(\mathbf{w}_k^i)\right)   }_{\triangleq A_2}\nonumber\\
    &\;\;+\underbrace{\sum_{k=1}^K\bar D_k^i\left(\eta\nabla f_{L,k}(\mathbf{w}_k^i)\right)-     \sum_{k=1}^K\bar D_k^i\left(\eta\nabla f_{L,k}(\mathbf{w}_k^i,s_k^i)\right)  }_{\triangleq A_3}\Biggr\|^2\Biggr]\nonumber\\     &=\mathbb{E}\left[\left\|A_2\right\|^2\right]+\mathbb{E}\left[\left\|A_3\right\|^2\right],
    \end{align}
    where $\mathbb{E}[A_2A_3]=0$ since $\mathbb{E}\left[A_3\right]=0$.
    {}By expanding $\|A_2\|^2$, we can get
\begin{align}\label{A_2}
    \|A_2\|^2=&\left\|\bar{\mathbf{w}}^i-\mathbf{w}^*\right\|^2\underbrace{-2\eta\sum_{k=1}^K\bar D_k^i\left\langle \bar{\mathbf{w}}^i-\mathbf{w}^*,\nabla f_{L,k}(\mathbf{w}_k^i)\right\rangle}_{\triangleq A_4}\nonumber\\
    &+\underbrace{\left\|\sum_{k=1}^K\bar D_k^i\left(\eta\nabla f_{L,k}(\mathbf{w}_k^i)\right)\right\|^2}_{\triangleq A_5},
\end{align}
where $A_4$ can be rewritten as
\begin{align}\label{A_4}
    A_4&=\underbrace{-2\eta\sum_{k=1}^K\bar D_k^i\left\langle \bar{\mathbf{w}}^i-\mathbf{w}_k^i,\nabla f_{L,k}(\mathbf{w}_k^i)\right\rangle}_{B_1}\nonumber\\
    &\;\;\underbrace{-2\eta\sum_{k=1}^K\bar D_k^i\left\langle\mathbf{w}_k^i-\mathbf{w}^*,\nabla f_{L,k}(\mathbf{w}_k^i)\right\rangle}_{B_2}.
\end{align}
{We then introduce the following lemma.
\begin{lemma}\label{AMGM_CS}
    For any $\mathbf{a}$, $\mathbf{b}$ and $\eta>0$, we have the inequality
    \begin{align}
        -2\langle \mathbf{a},\mathbf{b} \rangle\leq\frac{1}{\eta}\|\mathbf{a}\|^2+\eta\|\mathbf{b}\|^2
    \end{align}
\end{lemma}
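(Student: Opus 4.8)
The plan is to obtain the inequality from the most elementary fact available, namely the non-negativity of the squared norm of a suitably scaled combination of $\mathbf{a}$ and $\mathbf{b}$. Since $\eta>0$, both $\sqrt{\eta}$ and $1/\sqrt{\eta}$ are well defined, so I would introduce the vector $\frac{1}{\sqrt{\eta}}\mathbf{a}+\sqrt{\eta}\,\mathbf{b}$ and start from $\bigl\|\frac{1}{\sqrt{\eta}}\mathbf{a}+\sqrt{\eta}\,\mathbf{b}\bigr\|^2\geq 0$.

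Expanding this square via the bilinearity and symmetry of the Euclidean inner product gives
\begin{align}
0\leq\left\|\tfrac{1}{\sqrt{\eta}}\mathbf{a}+\sqrt{\eta}\,\mathbf{b}\right\|^2=\tfrac{1}{\eta}\|\mathbf{a}\|^2+2\langle\mathbf{a},\mathbf{b}\rangle+\eta\|\mathbf{b}\|^2,
\end{align}
and rearranging the three terms isolates $-2\langle\mathbf{a},\mathbf{b}\rangle$ on the left, yielding exactly $-2\langle\mathbf{a},\mathbf{b}\rangle\leq\frac{1}{\eta}\|\mathbf{a}\|^2+\eta\|\mathbf{b}\|^2$, which is the claim. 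Taking expectations is not needed here, as the statement is deterministic; the lemma will later be applied pointwise to the random quantities $\bar{\mathbf{w}}^i-\mathbf{w}_k^i$ and $\nabla f_{L,k}(\mathbf{w}_k^i)$ appearing in $B_1$ and $B_2$ before an expectation is taken.

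There is essentially no obstacle: this is the Peter--Paul (weighted Young) inequality specialized to an inner product space, and the only point to keep in mind is that the hypothesis $\eta>0$ is precisely what is needed for the scaling to make sense. An equivalent argument combines the Cauchy--Schwarz bound $|\langle\mathbf{a},\mathbf{b}\rangle|\leq\|\mathbf{a}\|\,\|\mathbf{b}\|$ with the scalar AM--GM inequality applied to $\frac{\|\mathbf{a}\|}{\sqrt{\eta}}$ and $\sqrt{\eta}\,\|\mathbf{b}\|$, which explains the lemma's name; I would nevertheless present the completing-the-square version above, since it is self-contained and does not require invoking any auxiliary inequality.
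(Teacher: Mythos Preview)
Your proof is correct. The paper does not give a self-contained argument but simply cites \cite{Zhou2022}, indicating that the inequality follows from combining Cauchy--Schwarz with the AM--GM inequality; you already mention this route explicitly as an equivalent alternative. Your completing-the-square argument from $\bigl\|\tfrac{1}{\sqrt{\eta}}\mathbf{a}+\sqrt{\eta}\,\mathbf{b}\bigr\|^2\geq 0$ is arguably cleaner, since it reaches the conclusion in one step without invoking two auxiliary inequalities, but for a fact this elementary the difference is purely cosmetic.
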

\begin{proof}
Details can be found in \cite{Zhou2022} by using Cauchy–Schwarz and arithmetic–geometric mean inequalities.
    \end{proof}}From Lemma \ref{AMGM_CS} and $L$-smoothness, $B_1$ is upper bounded by
\begin{align}\label{B_1_up}
    &B_1\leq\eta\sum_{k=1}^K\bar D_k^i\Biggl(\frac{1}{\eta}\left\|\bar{\mathbf{w}}^i-\mathbf{w}_k^i\right\|^2+\eta\|\nabla f_{L,k}(\mathbf{w}_k^i)\|^2\Biggr)\\
    &\leq\sum_{k=1}^K\bar D_k^i\Biggl(\left\|\bar{\mathbf{w}}^i-\mathbf{w}_k^i\right\|^2+\eta^22L\left(f_{L,k}\left(\mathbf{w}_k^i\right)-f_{L,k}(\mathbf{w}_k^*)\right)\Biggr). \nonumber
\end{align}
Additionally, by successively using {$\mu$-strongly convex and Jensen's inequality}, $B_2$ is upper bounded by
\begin{align}\label{B_2_up}
    B_2&\leq -2\eta\sum_{k=1}^K\bar D_k^i\biggl(f_{L,k}(\mathbf{w}_k^i)-f_{L,k}(\mathbf{w}^*)+\frac{\mu}{2}\|\mathbf{w}_k^i-\mathbf{w}^*\|^2\biggr)\nonumber\\
    &\leq -2\eta\sum_{k=1}^K\bar D_k^i\biggl(f_{L,k}(\mathbf{w}_k^i)-f_{L,k}(\mathbf{w}^*)\biggr)\nonumber\\
    &\;\;\;\;-\eta\mu\left\|{\bar{\mathbf{w}}^{i}}-\mathbf{w}^*\right\|^2.
\end{align}
Using {Jensen's inequality and Remark \ref{assumption_lsmooth_remark_1}}, $A_5$ is bounded by
\begin{align}\label{A_5}
    A_5\leq2L\eta^2\sum_{k=1}^K\bar D_k^i\left(f_{L,k}(\mathbf{w}_k^i)-{f_{L,k}(\mathbf{w}_k^*)}\right) \,.
\end{align}
By combining (\ref{A_4}) and (\ref{B_1_up})--(\ref{A_5}), (\ref{A_2}) is bounded by 
\begin{align}\label{A_2_upp_1}
     \|A_2\|^2&\leq (1-\eta\mu)\left\|\bar{\mathbf{w}}^i-\mathbf{w}^*\right\|^2+\sum^K_{k=1}\bar D_k^i\left\| {\bar{\mathbf{w}}^{i}}-\mathbf{w}_k^i\right\|^2\nonumber\\
    &\;+\underbracea{4L\eta^2\sum^K_{k=1}\bar D_k^i\left(f_{L,k}(\mathbf{w}_k^i)-f_{L,k}(\mathbf{w}_k^*)\right)}\nonumber\\
    &\;\underbracebd{-2\eta\sum^K_{k=1}\bar D_k^i\left(f_{L,k}(\mathbf{w}_k^i)-f_{L,k}(\mathbf{w}^*)\right)}_{B_3\qquad\qquad\qquad\qquad\qquad\qquad\;}.
\end{align}
We then provide an upper bound for $B_3$ as follows.
\begin{lemma}\label{B_3_lemma}
    For given $\eta\leq1/2L$, $B_3$ is upper bounded by
    \begin{align}
        B_3\leq\frac{\eta L^2\epsilon_w}{2}(\zeta+4\eta)+\frac{\zeta}{2\eta}\sum^K_{k=1}\bar D_k^i\left\|\mathbf{w}_k^i-{\bar{\mathbf{w}}^{i}}\right\|^2,
    \end{align}
    where $\zeta=2\eta(1-\eta2L)$.
\end{lemma}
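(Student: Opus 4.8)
The plan is to bound $B_3 = -2\eta\sum_{k=1}^K\bar D_k^i\left(f_{L,k}(\mathbf{w}_k^i)-f_{L,k}(\mathbf{w}^*)\right)$ by relating each term $f_{L,k}(\mathbf{w}_k^i)-f_{L,k}(\mathbf{w}^*)$ to quantities that can be absorbed into the recursion. First I would split the difference by inserting the local optimum $\mathbf{w}_k^*$ of $f_{L,k}$:
\begin{align}
f_{L,k}(\mathbf{w}_k^i)-f_{L,k}(\mathbf{w}^*)=\underbrace{\left(f_{L,k}(\mathbf{w}_k^i)-f_{L,k}(\mathbf{w}_k^*)\right)}_{\geq 0}+\underbrace{\left(f_{L,k}(\mathbf{w}_k^*)-f_{L,k}(\mathbf{w}^*)\right)}_{\leq 0}.\nonumber
\end{align}
Since $\mathbf{w}_k^*$ minimizes $f_{L,k}$, the second bracket is non-positive, so $-2\eta\bar D_k^i$ times it is non-negative; I would bound that contribution using Remark~\ref{assumption_lsmooth_remark_2} (the $L$-smooth upper bound $f_{L,k}(\mathbf{w}_k^*)-f_{L,k}(\mathbf{w}^*)\geq -\frac{L}{2}\|\mathbf{w}^*-\mathbf{w}_k^*\|^2\geq -\frac{L}{2}\epsilon_w$), using Assumption~\ref{assumption_uconcave}, or a combination, to produce the $\frac{\eta L^2\epsilon_w}{2}(\zeta+4\eta)$ term. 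For the first (non-negative) bracket, the coefficient $-2\eta\bar D_k^i$ is negative, so I want to show it can be dominated by a convexity-type lower bound; using $\mu$-strong convexity or $L$-smoothness of $f_{L,k}$ relating $f_{L,k}(\mathbf{w}_k^i)-f_{L,k}(\mathbf{w}_k^*)$ to $\|\mathbf{w}_k^i-\mathbf{w}_k^*\|^2$, and then $\|\mathbf{w}_k^i-\mathbf{w}_k^*\|^2\leq 2\|\mathbf{w}_k^i-\bar{\mathbf{w}}^i\|^2+2\|\bar{\mathbf{w}}^i-\mathbf{w}_k^*\|^2$ or a triangle-inequality splitting through $\mathbf{w}^*$, I would surface the $\frac{\zeta}{2\eta}\sum_k\bar D_k^i\|\mathbf{w}_k^i-\bar{\mathbf{w}}^i\|^2$ term.

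The key steps in order: (1) insert $\mathbf{w}_k^*$ and separate $B_3$ into a "drift away from local optimum" piece and a "local-vs-global optimum gap" piece; (2) control the local-vs-global gap via Remark~\ref{assumption_lsmooth_remark_2} and the bound $\|\mathbf{w}^*-\mathbf{w}_k^*\|^2\leq\epsilon_w$, producing a term proportional to $\epsilon_w$ whose coefficient works out to $\frac{\eta L^2}{2}(\zeta+4\eta)$ after tracking the factors of $\eta$ and $L$; (3) for the drift piece, use $\mu$-strong convexity (Assumption~\ref{assumption_uconcave}) or smoothness to pass from function-value differences to $\|\mathbf{w}_k^i-\mathbf{w}_k^*\|^2$, then decompose this through $\bar{\mathbf{w}}^i$; (4) use the definition $\zeta=2\eta(1-\eta 2L)$ together with the hypothesis $\eta\leq 1/(2L)$ (which guarantees $\zeta\geq 0$, so the sign of each inequality is preserved) to consolidate coefficients; (5) invoke $\sum_k\bar D_k^i=1$ and Jensen where convex combinations appear, to collapse the $\epsilon_w$-terms into a single scalar.

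The main obstacle I anticipate is getting the constants exactly right — specifically, tracking which inequality ($\mu$-strong convexity versus $L$-smoothness versus the $\|\nabla f\|^2\leq 2L(f-f^*)$ form in Remark~\ref{assumption_lsmooth_remark_1}) is applied to which term so that the leftover non-negative $f_{L,k}(\mathbf{w}_k^i)-f_{L,k}(\mathbf{w}_k^*)$ terms either cancel against the $4L\eta^2\sum_k\bar D_k^i(f_{L,k}(\mathbf{w}_k^i)-f_{L,k}(\mathbf{w}_k^*))$ term already present in~(\ref{A_2_upp_1}) or are absorbed by the condition $\eta\leq 1/(2L)$. Indeed, the role of the hypothesis $\eta\leq 1/(2L)$ here is almost certainly to ensure that the combined coefficient of those function-value differences is non-positive (equivalently $1-2\eta L\geq 0$), so they can simply be dropped; verifying that this is precisely what makes $B_3$ plus the adjacent term in~(\ref{A_2_upp_1}) telescope into the claimed bound is the delicate bookkeeping step. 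Beyond that, the argument is a routine chain of the four assumptions plus the two remarks, and I would expect the $\zeta$-notation to be introduced precisely to package the recurring factor $2\eta(1-2\eta L)$ that emerges from this cancellation.
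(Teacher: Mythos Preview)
Your proposal has two concrete issues that would prevent it from closing.

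First, you have misread the definition of $B_3$: in (\ref{A_2_upp_1}) the underbrace labelled $B_3$ spans \emph{both} lines, so
\[
B_3=4L\eta^2\sum_{k}\bar D_k^i\bigl(f_{L,k}(\mathbf{w}_k^i)-f_{L,k}(\mathbf{w}_k^*)\bigr)-2\eta\sum_{k}\bar D_k^i\bigl(f_{L,k}(\mathbf{w}_k^i)-f_{L,k}(\mathbf{w}^*)\bigr),
\]
not just the $-2\eta$ sum. The paper's first move is therefore to regroup these two sums into $-\zeta\,C_1+4L\eta^2\Gamma$, where $C_1=\sum_k\bar D_k^i(f_{L,k}(\mathbf{w}_k^i)-f_{L,k}(\mathbf{w}^*))$ and $\Gamma=\sum_k\bar D_k^i(f_{L,k}(\mathbf{w}^*)-f_{L,k}(\mathbf{w}_k^*))$; the factor $\zeta=2\eta(1-2\eta L)$ emerges precisely from this regrouping, not later.

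Second, and more substantively, the pivot you insert is the wrong one. The paper does \emph{not} split $f_{L,k}(\mathbf{w}_k^i)-f_{L,k}(\mathbf{w}^*)$ through the local optimum $\mathbf{w}_k^*$; it splits $C_1$ through the virtual aggregate $\bar{\mathbf{w}}^i$. Then, for the piece $f_{L,k}(\mathbf{w}_k^i)-f_{L,k}(\bar{\mathbf{w}}^i)$, it applies plain convexity (first-order lower bound, not $\mu$-strong convexity) to get $\langle\nabla f_{L,k}(\bar{\mathbf{w}}^i),\mathbf{w}_k^i-\bar{\mathbf{w}}^i\rangle$, and then Lemma~\ref{AMGM_CS} (Young/Cauchy--Schwarz) to this inner product, which is exactly what produces $-\tfrac{1}{2\eta}\|\mathbf{w}_k^i-\bar{\mathbf{w}}^i\|^2$ in the lower bound for $C_1$ and hence $+\tfrac{\zeta}{2\eta}\|\mathbf{w}_k^i-\bar{\mathbf{w}}^i\|^2$ in the upper bound for $B_3$. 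Your route---strong convexity to $\|\mathbf{w}_k^i-\mathbf{w}_k^*\|^2$ and then a triangle split through $\bar{\mathbf{w}}^i$---goes the wrong way: the coefficient in front of that norm is negative, so replacing $\|\mathbf{w}_k^i-\mathbf{w}_k^*\|^2$ by the larger $2\|\mathbf{w}_k^i-\bar{\mathbf{w}}^i\|^2+2\|\bar{\mathbf{w}}^i-\mathbf{w}_k^*\|^2$ is not an upper bound. It would also introduce $\mu$, which does not appear in the claimed inequality. The remaining piece $\sum_k\bar D_k^i(f_{L,k}(\bar{\mathbf{w}}^i)-f_{L,k}(\mathbf{w}^*))$, after using Remark~\ref{assumption_lsmooth_remark_1} on $\|\nabla f_{L,k}(\bar{\mathbf{w}}^i)\|^2$, has coefficient $\zeta(\eta L-1)\leq 0$ and is dropped; the $\epsilon_w$ term comes from bounding $\Gamma$ via Remark~\ref{assumption_lsmooth_remark_2}, which you did identify correctly.
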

\begin{proof}
    See Appendix \ref{B_3_proof} for the detailed proof.
\end{proof}By using Lemma \ref{B_3_lemma} in (\ref{A_2_upp_1}), $\mathbb{E}\left[\|A_2\|^2\right]$ is bounded by
\begin{align}\label{A_2_upp}
     &\mathbb{E}\left[\|A_2\|^2\right]\leq (1-\eta\mu)\mathbb{E}\left[\left\|\bar{\mathbf{w}}^i-\mathbf{w}^*\right\|^2\right]+\frac{\eta L^2\epsilon_w}{2}(\zeta+4\eta)\nonumber\\
              &\;+\left(1+\frac{\zeta}{2\eta}\right)\mathbb{E}\left[\sum^K_{k=1}\bar D_k^i\left\|\mathbf{w}_k^i-{\bar{\mathbf{w}}^{i}}\right\|^2\right],
\end{align}where we have
\begin{align}\label{thrid_term_upperbound}
&\mathbb{E}\left[\sum^K_{k=1}\bar D_k^i\left\|\mathbf{w}_k^i-{\bar{\mathbf{w}}^{i}}\right\|^2\right]=\sum^K_{k=1}\bar D_k^i\mathbb{E}\Bigg[\Biggl\|\left(\mathbf{w}_k^i-\bar{\mathbf{w}}^{i'}\right) \nonumber\\
&-\left({\bar{\mathbf{w}}^{i}}-\bar{\mathbf{w}}^{i'}\right)\Biggr\|^2\Bigg]\leq\sum^K_{k=1}\bar D_k^i\mathbb{E}\left[\left\|\left(\mathbf{w}_k^i-\bar{\mathbf{w}}^{i'}\right)\right\|^2\right],
\end{align}{where the inequality follows from the fact $\mathbb{E}\left[\|x - \mathbb{E}[x]\|^2\right] \leq \mathbb{E}\left[\|x\|{^2}\right]$ and $\mathbb{E}\left[\mathbf{w}_k^i-\bar{\mathbf{w}}^{i'}\right]=\bar{\mathbf{w}}^i-\bar{\mathbf{w}}^{i'}$. }

Subsequently, the third term in (\ref{A_2_upp}) can be bounded by invoking Assumption \ref{Assumption_sigmaG} and {considering two cases: $i+1\in\mathcal{I}$ and $i+1\not\in\mathcal{I}$.} Specifically, when $i+1\in\mathcal{I}$, we assume that there exists an update step $i'\leq i$ such that $i'\in\mathcal{I}$ and $i-i'\leq I-1$. Then, we have
\begin{align}\label{local_iter_expect_i+1}
&\sum^K_{k=1}\bar D_k^i\mathbb{E}\left[\left\|\left(\mathbf{w}_k^i-\bar{\mathbf{w}}^{i'}\right)\right\|^2\right]\\
&=\sum^K_{k=1}\bar D_k^i\mathbb{E}\left[\left\|-\sum_{t=i'}^{i-1}\eta\nabla f_{L,k}(\mathbf{w}^t_k,s_k^t)\right\|^2\right]\leq\nonumber\\
&\sum^K_{k=1}\bar D_k^i(I-1)\sum_{t=i'}^{i-1}\eta^2\mathbb{E}\left[\left\|\nabla f_{L,k}(\mathbf{w}^t_k,s_k^t)\right\|^2\right]=(I-1)^2\eta^2\epsilon_s^2 \;. \nonumber
\end{align}
On the other hand, when $i+1\not\in\mathcal{I}$ and $i \in \mathcal{I}$, we assume that there exists an update step $i'\leq i$ such that $i'\in\mathcal{I}$ and $i-i'= I$. Then, we have
\begin{align}\label{local_iter_expect_i}
&\sum^K_{k=1}\bar D_k^i\mathbb{E}\left[\left\|\left(\mathbf{w}_k^i-\bar{\mathbf{w}}^{i'}\right)\right\|^2\right]\\
&=\sum^K_{k=1}\bar D_k^i\mathbb{E}\left[\left\|\sum^K_{k=1}\bar D_k^{i-1}\left(-\sum_{t=i'}^{i-1}\eta\nabla f_{L,k}(\mathbf{w}^t_k,s_k^t)\right)\right\|^2\right]\nonumber\\
&\leq\sum^K_{k=1}\bar D_k^i\sum^K_{k=1}\bar D_k^{i-1}I\sum_{t=i'}^{i-1}\eta^2\mathbb{E}\left[\left\|\nabla f_{L,k}(\mathbf{w}^t_k,s_k^t)\right\|^2\right]=I^2\eta^2\epsilon_s^2  \;. \nonumber
\end{align}
By using (\ref{local_iter_expect_i+1}) and (\ref{local_iter_expect_i}) in (\ref{thrid_term_upperbound}), $\mathbb{E}\left[\|A_2\|^2\right]$ in (\ref{A_2_upp}) is then upper bounded by
\begin{align}\label{A_2_upp_final}
     &\mathbb{E}\left[\|A_2\|^2\right]\leq (1-\eta\mu)\mathbb{E}\left[\left\|\bar{\mathbf{w}}^i-\mathbf{w}^*\right\|^2\right]+\frac{\eta L^2\epsilon_w}{2}(\zeta+4\eta)\nonumber\\
              &\;+\left(1+\frac{\zeta}{2\eta}\right) I^2\eta^2\epsilon_s^2 \;.
\end{align}

From the definition of $A_3$ in (\ref{weight_optimal_expression}), we can get 
\begin{align}\label{A_3_expect}
    &\mathbb{E}\left[\|A_3\|^2\right]=\mathbb{E}\left[\left\|\sum^K_{k=1}\bar D_k^i\eta\left(\nabla f_{L,k}(\mathbf{w}_k^i)-\nabla f_{L,k}(\mathbf{w}_k^i,s_k^i)\right)\right\|^2\right]\nonumber\\
    &\leq\sum^K_{k=1}(\bar D_k^i)^2\eta^2\mathbb{E}\left[\left\|\left(\nabla f_{L,k}(\mathbf{w}_k^i)-\nabla f_{L,k}(\mathbf{w}_k^i,s_k^i)\right)\right\|^2\right]\nonumber\\
    &\leq\sum^K_{k=1}(\bar D_k^i)^2\eta^2\epsilon^2_v \;,
\end{align}{where Jensen's inequality and Assumption 3 are applied in the first and second inequalities, respectively.} 

By substituting (\ref{A_2_upp_final}) and (\ref{A_3_expect}) into (\ref{weight_optimal_expression}), we obtain an upper bound on the expected difference:
\begin{align}\label{local_iter_expect_i+1_i}
    &\mathbb{E}\left[\|\bar{\mathbf{w}}^{i+1}-\mathbf{w}^*\|^2\right]\nonumber\\    
    &\leq (1-\eta\mu)\mathbb{E}\left[\left\|\bar{\mathbf{w}}_k^i-\mathbf{w}^*\right\|^2\right]+\frac{\eta L^2\epsilon_w}{2}(\zeta+4\eta)\nonumber\\
    &\;\;\;\;\;+\left(1+\frac{\zeta}{2\eta}\right)I^2\eta^2\epsilon_s^2+\sum^K_{k=1}(\bar D_k^i)^2\eta^2\epsilon^2_v.
\end{align}
By recursively applying the bound in (\ref{local_iter_expect_i+1_i}), we derive the final result given in (\ref{accuracy_proof}). Hence, the proof is completed.

\section{{Proof of Lemma \ref{B_3_lemma}}} \label{B_3_proof}

From (\ref{A_2_upp_1}), the term $B_3$ can be rewritten in (\ref{B_3_exp}) at the {top of the next page}, {where the term $C_1$ is bounded in (\ref{C_1_exp})} by applying a first-order Taylor expansion in step ($a$), invoking Lemma \ref{AMGM_CS} in step ($b$) and leveraging the $L$-smoothness property in step ($c$). Substituting (\ref{C_1_exp}) into (\ref{B_3_exp}) yields an upper bound on \( B_3 \):
\setcounter{equation}{0}
\renewcommand\theequation{C.\arabic{equation}}
\begin{figure*}[t]
\begin{align}
B_3&=4L\eta^2\sum^K_{k=1}\bar D_k^i\left(f_{L,k}(\mathbf{w}_k^i)-f_{L,k}(\mathbf{w}_k^*)\right)-2\eta\sum^K_{k=1}\bar D_k^i\left(f_{L,k}(\mathbf{w}_k^i)-f_{L,k}(\mathbf{w}^*)\right)\nonumber\\
    &=-2\eta(1-\eta2L)\sum^K_{k=1}\bar D_k^i\left(f_{L,k}(\mathbf{w}_k^i)-f_{L,k}(\mathbf{w}^*)\right)+\eta^24L\left(
    \sum^K_{k=1}\bar D_k^i\left(f_{L,k}(\mathbf{w}_k^i)-f_{L,k}(\mathbf{w}_k^*)\right)-\sum^K_{k=1}\bar D_k^i\left(f_{L,k}(\mathbf{w}_k^i)-f_{L,k}(\mathbf{w}^*)\right)\right)\nonumber\\
    &=-2\eta(1-\eta2L)\underbrace{\sum^K_{k=1}\bar D_k^i\left(f_{L,k}(\mathbf{w}_k^i)-f_{L,k}(\mathbf{w}^*)\right)}_{C_1}+\eta^24L\left(\sum^K_{k=1}\bar D_k^i\left(f_{L,k}(\mathbf{w}^*)-f_{L,k}(\mathbf{w}_k^*)\right)\right).\label{B_3_exp}\\
C_1&=\sum^K_{k=1}\bar D_k^i\left(f_{L,k}(\mathbf{w}^i_k)-f_{L,k}\left(\bar{\mathbf{w}}^{i}\right)\right)+\sum^K_{k=1}\bar D_k^i\left(f_{L,k}\left(\bar{\mathbf{w}}^{i}\right)-f_{L,k}(\mathbf{w}^*)\right)\nonumber\\
&\overset{(a)}\geq\sum^K_{k=1}\bar D_k^i{\left\langle\nabla f_{L,k}\left(\bar{\mathbf{w}}^{i}\right),\mathbf{w}^i_k-\bar{\mathbf{w}}^{i}\right\rangle}+\sum^K_{k=1}\bar D_k^i\left(f_{L,k}\left(\bar{\mathbf{w}}^{i}\right)-f_{L,k}(\mathbf{w}^*)\right)\nonumber\\
&\overset{(b)}\geq-\sum^K_{k=1}\bar D_k^i\left(\frac{1}{2}\eta\left\|\nabla f_{L,k}\left(\bar{\mathbf{w}}^{i}\right)\right\|^2+\frac{1}{2\eta}\left\|\mathbf{w}^i_k-\bar{\mathbf{w}}^{i}\right\|^2\right)+\sum^K_{k=1}\bar D_k^i\left(f_{L,k}\left(\bar{\mathbf{w}}^{i}\right)-f_{L,k}(\mathbf{w}^*)\right)\nonumber\\
&\overset{(c)}\geq-\sum^K_{k=1}\bar D_k^i\left(\eta L \left(f_{L,k}\left(\bar{\mathbf{w}}^{i}\right)-f_{L,k}(\mathbf{w}_k^*)\right)+\frac{1}{2\eta}\left\|\mathbf{w}^i_k-\bar{\mathbf{w}}^{i}\right\|^2\right)+\sum^K_{k=1}\bar D_k^i\left(f_{L,k}\left(\bar{\mathbf{w}}^{i}\right)-f_{L,k}(\mathbf{w}^*)\right).\label{C_1_exp}
\end{align}
\end{figure*}
\begin{align}\label{B3_upperbound_1}
    B_3&\leq\zeta\Biggl[\sum^K_{k=1}\bar D_k^i\Biggl(\eta L \left(f_{L,k}\left(\bar{\mathbf{w}}^{i}\right)-f_{L,k}(\mathbf{w}_k^*)\right)\nonumber\\
    &\;+\frac{1}{2\eta}\left\|\mathbf{w}^i_k-\bar{\mathbf{w}}^{i}\right\|^2\Biggr)-\sum^K_{k=1}\bar D_k^i\left(f_{L,k}\left(\bar{\mathbf{w}}^{i}\right)-f_{L,k}(\mathbf{w}^*)\right)\Biggr]\nonumber\\
    &\;+\eta^24L\sum^K_{k=1}\bar D_k^i\left(f_{L,k}(\mathbf{w}^*)-f_{L,k}(\mathbf{w}_k^*)\right)\,,
\end{align}
 {where $\zeta=2\eta(1-\eta2L)\geq 0$. To simplify the expression, we define $\Gamma \triangleq \sum^K_{k=1}\bar D_k^i(f_{L,k}(\mathbf{w}^*)-f_{L,k}(\mathbf{w}_k^*)$. Applying Remark \ref{assumption_lsmooth_remark_2} which states that $f_{L,k}(\mathbf{w})$ is $L$-smooth, we have $\Gamma \leq \frac{L}{2}\sum^K_{k=1}\bar D_k^i(\|\mathbf{w}^*-\mathbf{w}_k^*\|^2)$. Assuming that the deviation between the optimal parameter of each UE and the global optimal parameter is bounded, i.e.,  $\|\mathbf{w}^*-\mathbf{w}_k^*\|^2\leq\epsilon_w$, we obtain $\Gamma\leq \frac{L\epsilon_w}{2}$. Hence, the upper bound in (\ref{B3_upperbound_1}) can be rewritten as ($a$) in (\ref{B_3_exp_fin}). Noting that $\sum^K_{k=1}\bar D_k^i=1$ and $f_{L,k}\left(\bar{\mathbf{w}}^{i}\right) \geq f_{L,k}(\mathbf{w}^*)$, we have $\sum^K_{k=1}\bar D_k^i\left(f_{L,k}\left(\bar{\mathbf{w}}^{i}\right)-\sum^K_{k'=1}\bar D_{k'}^i f_{L,{k'}}(\mathbf{w}^*)\right)= \sum^K_{k=1}\bar D_k^if_{L,k}\left(\bar{\mathbf{w}}^{i}\right)- \sum^K_{k=1}\bar D_k^if_{L,k}(\mathbf{w}^*) \geq 0$. Moreover, since $\eta \leq \frac{1}{2L}$, it implies $(\eta L - 1) \leq 0$. Substituting the above result into ($a$) of (\ref{B_3_exp_fin}) yields ($b$) of (\ref{B_3_exp_fin}). The proof is complete.}
\begin{figure*}[t]
\begin{align}
    B_3&\overset{(a)}\leq\underbrace{\zeta(\eta L-1)\sum^K_{k=1}\bar D_k^i \left(f_{L,k}\left(\bar{\mathbf{w}}^{i}\right)-\sum^K_{k=1}\bar D_k^i f_{L,k}(\mathbf{w}^*)\right)}_{\leq 0}+\eta L\Gamma(\zeta+4\eta)+\frac{\zeta}{2\eta}\sum^K_{k=1}\bar D_k^i\left\|\mathbf{w}^i_k-\bar{\mathbf{w}}^{i}\right\|^2\nonumber\\
   &\overset{(b)}\leq\frac{\eta L^2\epsilon_w}{2}(\zeta+4\eta)+\frac{\zeta}{2\eta}\sum^K_{k=1}\bar D_k^i\left\|\mathbf{w}^i_k-\bar{\mathbf{w}}^{i}\right\|^2.\label{B_3_exp_fin}
\end{align}
\hrule
\end{figure*}

\bibliography{references1}
\bibliographystyle{IEEEtran}

\end{document}